\documentclass[sigplan,nonacm]{acmart}

\usepackage{array}
\usepackage{graphicx}
\usepackage[export]{adjustbox}
\usepackage{amsmath}
\usepackage{tikz}
\usetikzlibrary{quantikz2}
\usetikzlibrary{arrows.meta}

\providecommand{\cb}[1]{\ensuremath{\vcenter{\hbox{#1}}}}

\usepackage{booktabs}

\providecommand{\tbd}{\mbox{--}}

\let\qcoptexttt\texttt
\renewcommand{\texttt}[1]{{\def\_{\textunderscore\allowbreak}\qcoptexttt{#1}}}

\newcommand{\isb}{I\textsuperscript{2}SB}
\AtBeginDocument{%
}
\begin{document}
\title{Bridge of \texorpdfstring{$\Psi$}{Psi}'s: Quantum Circuit Optimization
  with Schr\"odinger Bridges}
\author{Lino S. Hofstetter}
\authornote{Corresponding author}
\email{lhofstette@ethz.ch}
\affiliation{
  \institution{ETH Z\"urich}
  \city{Z\"urich}
  \country{Switzerland}
}
\author{Lia Yeh}
\email{ly404@cam.ac.uk}
\affiliation{
  \institution{University of Cambridge}
  \city{Cambridge}
  \country{United Kingdom}
}
\author{Prakash Murali}
\email{pm830@cam.ac.uk}
\affiliation{
  \institution{University of Cambridge}
  \city{Cambridge}
  \country{United Kingdom}
}
\renewcommand{\shortauthors}{Hofstetter et al.}
\begin{abstract}
Quantum circuit optimization replaces a circuit with an equivalent one of fewer gates and lower depth, reducing execution cost and error rate. 
We ask whether a generative model can learn this transformation directly from examples, rather than selecting from a fixed rewrite library or rigid algebraic routines.
We present Bridge of $\Psi$'s (BOPS), a generative model based on Schr\"odinger bridges, using a custom denoiser architecture, that learns a transformation from a source circuit into an equivalent optimized circuit. 
We train it on data constructed to be hard for existing optimizers, by applying rewrite rules backwards so that each input has a known lower-cost target.
On held-out $8\text{ qubits}\times64 \text{ depth}$ Clifford+$T$ circuits, BOPS reduces gate count by $2.46\times$ and depth by $2.45\times$ in geometric mean, outperforming all nine baseline optimizers.
This constitutes the first generative model bridging quantum circuits and frontier machine learning methods, opening up the quantum compilation stack to learned optimization along multiple axes.
\end{abstract}

\maketitle

\section{Introduction}
\label{sec:introduction}

\begin{figure*}[t]
  \centering
  \includegraphics[max width=\textwidth]{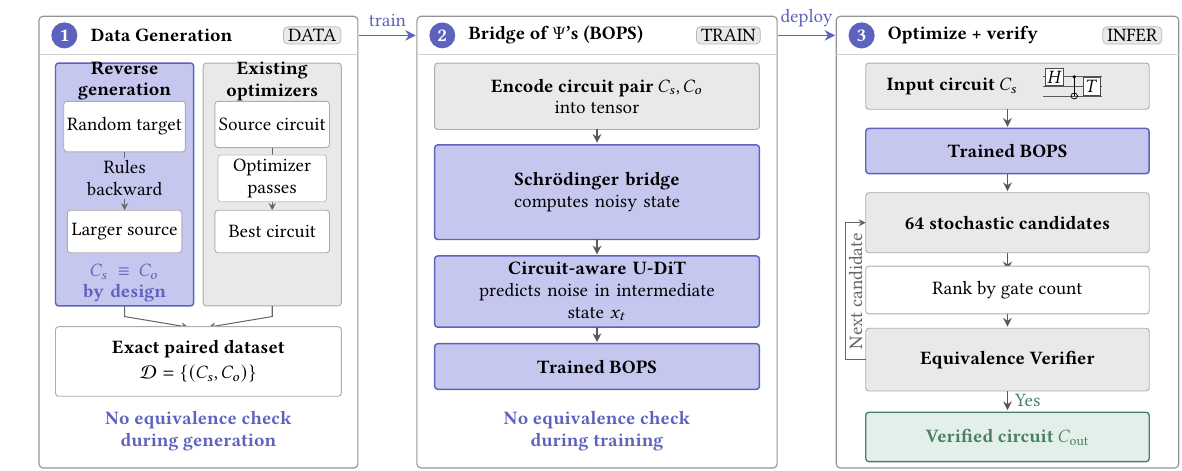}
  \caption{BOPS architecture and training pipeline; new components are shown in lavender.}
  \Description{Block diagram of BOPS. Reverse rewriting produces equivalent source--target circuit pairs; a source-conditioned U-shaped diffusion transformer samples optimized candidates, which are then verified for equivalence. Lavender marks the components introduced by BOPS.}
\label{fig:bops_architecture}
\end{figure*}

Quantum computing promises to extend the range of computations accessible to science, but fulfilling that promise requires new primitives for software--hardware co-design and automated compilation across the quantum systems stack.
Quantum hardware has advanced on several fronts: noisy processors support increasingly large experiments, while recent systems have executed circuits on encoded logical qubits and demonstrated surface-code memories below the error-correction threshold~\cite{preskill2018nisq,bluvstein2024logical,google2025qec}.
Quantum compilation has advanced in parallel, producing end-to-end toolchains and specialized passes for synthesis, optimization, qubit mapping, and routing~\cite{qiskit,sivarajah2020tket,hietala2021voqc,kissinger2020pyzx}.
These optimizations reduce resource requirements and help accelerate progress toward quantum advantage.

Quantum compilers translate hardware-independent quantum programs or circuits into device-executable circuits.
They decompose operations into the device's native gate set, map logical to physical qubits, insert routing operations to satisfy connectivity, and apply semantics-preserving optimizations~\cite{qiskit,sivarajah2020tket}.
However, reliance on hand-crafted analytical rewrites, local heuristics, and restricted search spaces severely limits their ability to generalize and scale across interacting objectives and constraints: gate count, circuit depth, total execution time, connectivity, real-time latency and bottleneck handling, classical--quantum tradeoffs, and heterogeneous fault-tolerance noise and cost models.

Applying recent advances in artificial intelligence to quantum systems could make compilation data-driven, large-scale, and multi-target, but first we must establish that an AI model can learn quantum circuit tasks at all.
AI has already proved effective at the lower level, including quantum-control steering~\cite{Sivak2026learnctrl} and high-throughput error decoding in fault-tolerant architectures~\cite{gu2026neuraldecoders}, and in circuit-level tasks such as Clifford circuit synthesis~\cite{yeung2026clifford} and classical simulation~\cite{koziellpipe2024gnnsim}.

Circuit optimization transforms a circuit into an equivalent lower-cost circuit, where we measure cost by gate count and depth.
Existing methods replace small patterns with cheaper equivalents, resynthesize two- or three-qubit blocks, apply formally verified rewrites, or rewrite graph representations through the ZX calculus~\cite{kissinger2020tcount}.
Qiskit, tket, VOQC, and PyZX combine these techniques~\cite{qiskit,sivarajah2020tket,hietala2021voqc,kissinger2020pyzx}.
Another class searches over sequences of equivalence-preserving transformations, sometimes using synthesized rather than hand-written rewrite rules~\cite{tzap,queso,guoq}.
Every noisy gate introduces error, so the probability of a correct computation generally falls as gate count and depth increase~\cite{preskill2018nisq,pointing2024quanto}.
Under fault tolerance, each logical gate instead becomes a gadget occupying several physical qubits for multiple machine cycles~\cite{litinski2019magic}.
Reducing gate count and depth therefore lowers execution cost in either setting.

Correct optimization must preserve or verify equivalence to the input.
Existing methods avoid repeated checks by restricting reachable circuits to equivalent ones. 
Learned optimizers likewise select predefined transformations or search specialized representations~\cite{foesel2021rl,li2024quarl,ruiz2025alphatensor}.
Moving beyond these restrictions requires checking generated circuits for equivalence---a parallelizable $\text{coNP}$-hard task.

Standard diffusion models generate by transporting Gaussian noise to data. A
Schr\"odinger bridge instead transports between arbitrary data distributions~\cite{debortoli2021dsb,shi2023dsbm}, while paired data lets it learn from corresponding examples~\cite{shi2023dsbm,tong2024sf2m,liu2023i2sb}. We therefore formulate circuit optimization as continuous transport from source circuits to optimized equivalents.
Replacing the noise endpoint with the source circuit compactly specifies the target unitary and supplies a valid implementation from which to start.
Because equivalence is embedded in the data, the bridge emits a complete optimized candidate rather than composing one from predefined rewrites.
This removes unitary construction and equivalence checking from the training loop.
Inference instead performs one fast verification check per candidate and discards inequivalent outputs (\S\ref{sec:verification}).

We implement this formulation as Bridge of $\Psi$'s (BOPS), a standalone optimizer that reads and writes OpenQASM 2 and can serve as a stage in existing compilation pipelines.
Its denoiser is a U-shaped diffusion transformer that resamples only the time axis and keeps every qubit at full resolution. 
One trained network therefore accepts any circuit that fits the grid (\S\ref{sec:denoisers}).
We write $Q\times D$ for that grid, $Q$ qubits by $D$ time steps, so an $8\times64$ circuit acts on eight qubits and has depth at most $64$.
We compare BOPS on gate count and depth with Qiskit~\cite{qiskit}, PyZX~\cite{kissinger2020pyzx}, tket~\cite{sivarajah2020tket}, TZAP~\cite{tzap}, VOQC~\cite{hietala2021voqc}, QUESO~\cite{queso}, and GUOQ~\cite{guoq}, verifying every reported output exactly (\S\ref{sec:verification}, \S\ref{sec:eval}).

\begin{table}[t]
\caption{BOPS and the strongest baseline optimizer on $8\times64$ circuits 
($n=2{,}686$); full results in Table~\ref{tab:q8t64}.}
\label{tab:headline}
\footnotesize
\centering
\setlength{\tabcolsep}{3pt}
\begin{tabular}{@{}lccc@{}}
\toprule
& gates reduced by & depth reduced by & target reached \\
\midrule
QUESO          & $2.11\times$ & $2.11\times$ & $26.1\%$ \\
BOPS & $\mathbf{2.46\times}$ & $\mathbf{2.45\times}$ & $\mathbf{77.2\%}$ \\
\bottomrule
\end{tabular}
\end{table}

\newcommand{\question}[1]{\paragraph{#1}}
For circuit optimization with generative models, we ask:

\question{Can a generative model optimize quantum circuits?}
\emph{Yes.} On generated $8\times64$ Clifford+$T$ circuits, BOPS outperforms every individual baseline in gate count, depth, and target recovery (Table~\ref{tab:headline}).

\question{Does it generalize to unseen circuit structures?}
\emph{Yes.} Although BOPS was not trained on large structured or arithmetic circuits, it remains competitive on circuit benchmarks, showing that its learned optimizations transfer beyond the training distribution (Table~\ref{tab:bench}).

\question{Are off-the-shelf diffusion models effective?}
\emph{No, but they can be adapted.} Standard vision/text diffusion backbones perform poorly (Table~\ref{tab:backbones}). BOPS remedies this by customizing its denoiser architecture (\S\ref{sec:denoisers}) and training regimen (\S\ref{sec:data}).

\question{Does BOPS scale to larger circuits?}
\emph{Yes, by finetuning.} Although circuit space grows superexponentially with the number of qubits (Lemma~\ref{lem:countfixD}), finetuning the $8\times64$ model to $16\times192$ improves $98.4\%$ of larger inputs under windowed inference (Table~\ref{tab:grids}).

\noindent\textbf{Contributions:}
\begin{itemize}
  \item We formulate circuit optimization as a conditional Schr\"odinger bridge from an input circuit to an optimized equivalent, without a fixed rewrite library or equivalence-preserving search space (\S\ref{sec:formulation}).

  \item We build exact training pairs from reverse rewrites and existing optimizers, including reductions those optimizers miss. Encoding equivalence in the data eliminates unitary construction and equivalence checks during training (\S\ref{sec:reverse}, \S\ref{sec:data}).

  \item We adapt U-DiT~\cite{tian2024udit} to variable circuit grids with time-only downsampling, permutation equivariance, and grouped attention, preserving full qubit resolution across widths and depths. The resulting denoiser shortens \textbf{88.9\%} of evaluation circuits, versus 76.2\% for plain U-DiT and 40.6\% for U-Net (Table~\ref{tab:backbones}, \S\ref{sec:denoisers}).

  \item One BOPS model outperforms all nine baselines in gate count and depth reduction, reaching the known target nearly $3\times$ as often as the strongest baseline. Inference uses a fixed number of network evaluations rather than size-dependent rewrite or search trajectories. Every output is exactly verified (Table~\ref{tab:headline}, \S\ref{sec:eval}).
\end{itemize}

\emph{To our knowledge, BOPS is the first generative model to match or outperform state-of-the-art hard-coded and search-based circuit optimizers, opening a path for applying recent AI advances across quantum compilation workflows.}

\section{Background}
\label{sec:background}

\subsection{Quantum circuits and optimization}
\label{sec:bg-circuits}

An $n$-qubit quantum state is represented by a vector with $2^n$ complex entries, and a quantum gate transforms this state through a unitary matrix~\cite{nielsen2010quantum}.
A circuit $C=(g_1,\ldots,g_m)$ is a sequence of gates with combined action represented by a $2^n \times 2^n$ unitary matrix $U(C)$.
We use the Clifford$+T$ gate set:
\begin{equation}
  \mathcal{G} = \{H,\ S,\ S^\dagger,\ T,\ T^\dagger,\ \mathrm{CX}\}.
  \label{eq:gateset}
\end{equation}
This gate set is a standard target for fault-tolerant compilation~\cite{litinski2019magic,ruiz2025alphatensor}.
The gates $H$, $S$, $S^\dagger$, $T$, and $T^\dagger$ act on one qubit, while $\mathrm{CX}$ connects two qubits.

Circuits are commonly drawn as grids in which each row is a qubit and gates are ordered from left to right.
This asymmetry between the qubit and time axes matters to our model's circuit representation.

Two circuits are equivalent, written $C \equiv C'$, when their unitaries differ only by a global phase:
\begin{equation}
  U(C) = e^{i\varphi}U(C').
  \label{eq:equivalence}
\end{equation}
Circuit optimization seeks an equivalent circuit with lower cost:
\begin{equation}
  C^\star = \operatorname*{arg\,min}_{C' \equiv C} \operatorname{cost}(C').
  \label{eq:optproblem}
\end{equation}
Here, we measure cost using gate count and circuit depth.

Although a circuit is stored as a sequence of gates, its unitary contains $4^n$ complex entries.
Explicitly constructing unitaries to compare two circuits therefore scales exponentially with the number of qubits.
Instead, in this work we use two fast but weaker tests for exact circuit equivalence that can be inconclusive (\S\ref{sec:verification}), and in Appendix~\ref{app:equiv} review the scaling of a GPU implementation for conclusive exact equivalence checking---a decision problem that is $\text{coNP}$-hard because 3-SAT can be embedded as reversible Boolean circuits over Clifford+$T$.

\subsection{Diffusion models and Schr\"odinger bridges}
\label{sec:bg-sb}

A diffusion model learns to generate data by reversing a gradual noising process~\cite{ho2020ddpm,song2021sde}.
The forward process starts from a data example $x_0$ and progressively adds Gaussian noise until the state at $t=1$ is approximately random noise.
The reverse process starts from that noise and repeatedly applies a neural network to recover a data sample.
Although the forward process is described as a sequence of steps, the distribution of an intermediate state $x_t$ has a closed form given $x_0$.
Training can therefore choose a random $t$ and sample $x_t$ directly, without generating any earlier states.
The network then predicts the noise separating $x_t$ from $x_0$, and the cost of a training example is independent of the number of denoising steps used during generation~\cite{ho2020ddpm}.

Standard diffusion fixes one endpoint of the process to Gaussian noise.
A Schr\"odinger bridge removes that restriction and instead connects two arbitrary data distributions while remaining as close as possible to a chosen reference diffusion process~\cite{debortoli2021dsb,leonard2014survey}.
Let $\mathcal{P}$ be the reference distribution over paths and $\mathcal{Q}$ a candidate path distribution with endpoint distributions $\pi_0$ and $\pi_1$.
The bridge minimizes how far $\mathcal{Q}$ departs from $\mathcal{P}$:
\begin{equation}
  \mathcal{Q}^\star =
  \operatorname*{arg\,min}_{\mathcal{Q}:\,\mathcal{Q}_0=\pi_0,\,\mathcal{Q}_1=\pi_1}
  D_{\mathrm{KL}}(\mathcal{Q}\,\|\,\mathcal{P}).
  \label{eq:sb}
\end{equation}
In plain terms, the bridge finds a likely random transformation between its endpoint distributions without departing unnecessarily from the reference noising process.
For arbitrary endpoint distributions, this bridge has no general closed form and is usually approximated by repeatedly fitting the process from alternating endpoints~\cite{debortoli2021dsb,shi2023dsbm}.

\isb{} makes training practical by assuming paired endpoints and using a Gaussian reference process with no deterministic drift~\cite{liu2023i2sb}.
Each training example supplies a target $x_0$ together with its corresponding source $x_1$.
Across the dataset, these pairwise bridges define transport between the source and target distributions.

The reference process adds Gaussian noise at a prescribed rate $\beta_t$, called the noise schedule.
The noise schedule defines the variances accumulated before and after time $t$.
\begin{equation}
  \sigma_{\mathrm{fwd}}(t)^2 = \int_0^t \beta_\tau\,\mathrm{d}\tau,
  \qquad
  \sigma_{\mathrm{bwd}}(t)^2 = \int_t^1 \beta_\tau\,\mathrm{d}\tau.
  \label{eq:bridge-variances}
\end{equation}
These variances determine the interpolation weight $\alpha_t$.
\begin{equation}
  \alpha_t =
  \frac{\sigma_{\mathrm{bwd}}(t)^2}
       {\sigma_{\mathrm{fwd}}(t)^2 + \sigma_{\mathrm{bwd}}(t)^2}.
  \label{eq:bridge-alpha}
\end{equation}
Conditioned on the endpoint pair, the distribution of every intermediate state is then known exactly.
\begin{equation}
  \begin{aligned}
    x_t &\sim \mathcal{N}(\mu_t,\sigma_t^2 I), \\
    \mu_t &= \alpha_t x_0 + (1-\alpha_t)x_1, \\
    \sigma_t^2 &= \alpha_t\,\sigma_{\mathrm{fwd}}(t)^2.
  \end{aligned}
  \label{eq:posterior}
\end{equation}
At $t=0$, this distribution collapses to $x_0$, and at $t=1$, it collapses to $x_1$.
Between them, its mean interpolates between the pair while its variance introduces noise.

The key consequence is that training does not need to solve the general bridge problem or simulate a complete path.
For each paired example, training samples a random $t$ and draws $x_t$ directly from the closed-form Gaussian in~\eqref{eq:posterior}.
The denoiser $\epsilon_\theta(x_t,t)$ receives the sampled state and its time.
We train it with the following mean-squared error.
\begin{equation}
  \mathcal{L}(\theta) =
  \mathbb{E}_{t,(x_0,x_1),x_t}\left[
    \left\|
      \epsilon_\theta(x_t,t) -
      \frac{x_t-x_0}{\sigma_{\mathrm{fwd}}(t)}
    \right\|^2
  \right].
  \label{eq:sbloss}
\end{equation}
The target is the displacement from $x_0$ to $x_t$, normalized by the noise accumulated up to time $t$~\cite{liu2023i2sb}.

Generation starts from $x_1$, uses the denoiser to predict $x_0$, and progressively samples earlier states from the corresponding Gaussian posterior~\cite{liu2023i2sb}.
In our application, $x_1$ represents a source circuit and $x_0$ its optimized equivalent.

\section{Design Objectives}
\label{sec:objectives}

\subsection{Related work in circuit optimization}
\label{sec:related}
Classical quantum-circuit optimizers target gate count, gate depth, two-qubit gate count, and related objectives through several broad techniques.
The non-universal $\{\mathrm{CX}, T\}$ fragment of the Clifford+T gate set is known to correspond to Reed--Muller codes.
Techniques focusing on T-count and T-depth optimization---both NP-hard problems~\cite{vanDeWeteringAmy2024}---include TODD~\cite{Heyfron2019todd}, PyZX~\cite{kissinger2020tcount}, FastTODD~\cite{Vandaele2025fasttodd}, and TZAP~\cite{tzap}.

Rule-based pattern-matching tools such as Qiskit~\cite{qiskit}, tket~\cite{sivarajah2020tket}, and VOQC, whose optimizations are verified in the Coq proof assistant~\cite{hietala2021voqc}, use peephole optimizations and templating to rapidly apply local rewrite rules from a fixed library to subcircuits.
Quartz~\cite{quartz}, Quanto~\cite{pointing2024quanto}, and QUESO~\cite{queso} automatically generate and verify rewrite rules for particular gate sets by enumerating small circuits on two to four qubits.
Graph-based rewriting approaches in the ZX-calculus convert circuits to graph representations, which are optimized through rewrite rules in graph form, before converting back to a circuit~\cite{Duncan2020graphthy, kissinger2020pyzx}.
Phase folding (also called phase polynomial, phase gadget, and Pauli exponential) methods are describable in circuit, graph, or symbolic form~\cite{nam2018automated, amy2014tpar, Cowtan2020phasegadget}.
GUOQ~\cite{guoq} combines QUESO's rules with BQSKit's continuous nonlinear resynthesis of blocks of at most three qubits~\cite{BQSKit}.

These approaches largely preserve equivalence by construction, avoiding dense-unitary comparisons that materialize $4^n$ complex entries, but remain limited by their encoded patterns and resynthesis width.
Symbolic checkers avoid this dense representation but can return inconclusive results~\cite{kissinger2020pyzx}.

Automatically generated identities expand the available transformations.
Quanto, for example, finds depth reductions missed by Qiskit and tket~\cite{pointing2024quanto}.
Yet on $n$ qubits there are $2^{\Theta(G\log n)}$ circuit representations (Fig.~\ref{fig:representation}) of $G$-gate circuits and $2^{\Theta(Dn\log n)}$ of circuits with depth at most $D$ (Appendix~\ref{app:prob}).
Practical rule-based optimization is therefore bounded by both the available transformations and the compositions the search can explore.

Learned optimizers can change the search policy without removing these bounds.
Reinforcement learning selects and orders predefined rewrites and can retain cost-increasing moves that greedy search prunes, but its action space remains fixed~\cite{foesel2021rl,li2024quarl}. AlphaTensor-Quantum instead optimizes signature-tensor decompositions, but targets $T$-count and handles structure outside CNOT+$T$ through compilation such as Hadamard gadgetization~\cite{ruiz2025alphatensor, Zen2026realphatensorq}.
Other learned approaches include sequence-to-sequence transformers~\cite{charton2023transformers} and reinforcement learning, using either Proximal Policy Optimization~\cite{Riu2025reinforcement} or Clifford-circuit random walks~\cite{yeung2026clifford}.

Prior diffusion models perform unitary-conditioned synthesis rather than source-conditioned optimization~\cite{furrutter2024diffusion,furrutter2025multimodal}.
AltGraph maps source-circuit DAGs to generated DAGs but reports approximate density-matrix error rather than exact equivalence~\cite{beaudoin2024altgraph}.
Recent work trains an autoregressive transformer for circuit optimization, but reports exact equivalence falling sharply with output length~\cite{saeedi2026autoregressive}.
To our knowledge, BOPS is the first source-conditioned Schr\"odinger bridge for circuit optimization.

\subsection{Design requirements}
\label{sec:towards}
In order to determine how much headroom there is for improvement, we generate optimized-unoptimized circuit pairs by starting with a random quantum circuit, and applying rewrite rules in reverse to expand to a larger equivalence circuit.
For each pair, in Table~\ref{tab:q8t64}, nine baseline optimizers start with the unoptimized circuit, and the strongest of them reaches the reduction of the starting optimized circuit for only $40.7\%$ of pairs.
This gap in headroom motivates a whole-circuit generator that does not select from an explicit inference-time action library.
Composed and nested rewrites can require coordinated changes across distant regions of a circuit.
A paired diffusion model updates the complete representation at each denoising step and learns from exact source--target pairs, requiring neither unitary construction nor equivalence checks during training (\S\ref{sec:bg-sb}).
A circuit-to-circuit bridge can therefore learn optimization from examples while exact inference-time verification enforces correctness.

Source-circuit conditioning makes this generator fit the interface of a compiler pass.
A dense encoding of an $n$-qubit unitary contains $4^n$ complex entries---$65{,}536$ at eight qubits---whereas an $8\times64$ circuit grid contains $512$ token positions (\S\ref{sec:representation}).
Unlike a unitary, the source also supplies an implementation to improve: in their three-qubit unitary-compilation experiments, F\"urrutter et al.\ start from noise and draw $1{,}024$ samples per target~\cite{furrutter2024diffusion}, whereas BOPS starts from a valid source and draws $64$ (\S\ref{sec:setup}).

Circuit grids impose asymmetric representation and denoising requirements.
Gaussian diffusion acts on continuous, fixed-size arrays, whereas circuits are discrete and vary in width and depth. 
Their encoding must preserve multi-qubit connections, support padding, and decode into valid circuits.
Rows identify qubits and columns order gates in time, so downsampling rows can merge wires while purely local operations can miss distant interactions.
The denoiser must therefore keep every qubit at full resolution and combine local patterns with circuit-wide context (\S\ref{sec:denoisers}).

Together, these requirements define our central question: \emph{Can a source-conditioned Schr\"odinger bridge produce equivalent, lower-cost circuits without an explicit inference-time rewrite library, using no unitary construction or equivalence checks during training and a fixed number of network evaluations per fixed-grid input?}

\section{Model and Training}
\label{sec:model}
\subsection{Circuit representation}
\label{sec:representation}
\begin{figure}[!t]
\centering
\definecolor{cellhi}{HTML}{C6C7EF}
\definecolor{cellbg}{HTML}{F2F2F2}
\fontsize{9.5pt}{10.2pt}\selectfont
\adjustbox{max width=\linewidth}{%
\begin{tabular}{@{}c@{\hspace{2mm}}c@{\hspace{2mm}}c@{}}
$\cb{\begin{quantikz}[column sep=0.18cm,row sep={0.64cm,between origins}]
\lstick{$q_0$} & \gate{H} & \ctrl{1} &  &  &  & \\
\lstick{$q_1$} &  & \targ{} &  & \gate{S^\dagger} & \targ{} & \\
\lstick{$q_2$} &  &  & \gate{D} &  & \ctrl{-1} &
\end{quantikz}}$
&
$\cb{\raisebox{0.2ex}{$\longmapsto$}}$
&
$\cb{\begin{tikzpicture}[x=0.62cm,y=-0.62cm,font=\sffamily\fontsize{9.5pt}{10.2pt}\selectfont,
                        every node/.style={inner sep=0pt}]
  \def\dx{0.75}\def\dy{-0.75}
  \fill[cellbg,draw=black!35,line width=0.4pt]
       (0,0) -- (5,0) -- ++(\dx,\dy) -- ++(-5,0) -- cycle;
  \fill[black!8,draw=black!35,line width=0.4pt]
       (5,0) -- (5,3) -- ++(\dx,\dy) -- ++(0,-3) -- cycle;
  \foreach \c in {1,...,4}{\draw[black!22,line width=0.3pt] (\c,0) -- ++(\dx,\dy);}
  \foreach \r in {1,2}{\draw[black!22,line width=0.3pt] (5,\r) -- ++(\dx,\dy);}
  \foreach \s in {0.25,0.5}{%
    \draw[black!22,line width=0.3pt] ({0+\s},{0-\s}) -- ({5+\s},{0-\s});
    \draw[black!22,line width=0.3pt] ({5+\s},{0-\s}) -- ({5+\s},{3-\s});}
  \foreach \r/\c/\v/\f in {%
    0/0/1/cellbg, 0/1/-6/cellhi, 0/2/0/cellbg, 0/3/0/cellbg, 0/4/0/cellbg,
    1/0/0/cellbg, 1/1/6/cellhi,  1/2/0/cellbg, 1/3/3/cellbg, 1/4/6/cellhi,
    2/0/0/cellbg, 2/1/0/cellbg,  2/2/4/cellbg, 2/3/0/cellbg, 2/4/-6/cellhi}{%
      \fill[\f] (\c,\r) rectangle ++(1,1);
      \draw[white,line width=1pt] (\c,\r) rectangle ++(1,1);
      \node at ({\c+0.5},{\r+0.5}) {\v};}
  \draw[black!45,line width=0.4pt] (0,0) rectangle (5,3);
  \draw[<->,line width=0.5pt] (0,3.30) -- node[below,inner ysep=2pt] {$D$} (5,3.30);
  \draw[<->,line width=0.5pt] (5.30,3.30) -- ++(\dx,\dy)
        node[midway,anchor=north west,inner sep=1pt] {$c$};
  \draw[<->,line width=0.5pt] (6.15,{3+\dy}) -- node[right,inner xsep=2pt] {$Q$} (6.15,\dy);
  \foreach \r/\q in {0/0,1/1,2/2}
    \node[anchor=east,inner xsep=3pt] at (0,{\r+0.5}) {$q_\q$};
\end{tikzpicture}}$
\end{tabular}}
\caption{\textbf{Circuit representation}, after F\"urrutter et
al.~\cite{furrutter2024diffusion,furrutter2025multimodal}. Tokens $1$--$6$ are the gates
of~\eqref{eq:gateset}, $0$ an empty cell, and $-6$/$6$ the control/target of a
$\mathrm{CX}$ (shaded).}
\Description{A three-qubit circuit drawn as a quantum circuit diagram on the left maps to a grid of tokens on the right. The grid has one row per qubit and one column per timestep, and is drawn as a three-dimensional block whose third axis is the nine embedding channels. Cells holding the control and target halves of a CX gate are shaded.}
\label{fig:representation}
\end{figure}

Diffusion acts on continuous tensors, whereas a circuit is a discrete sequence of gates, so we adopt the tensor representation of F\"urrutter et al.~\cite{furrutter2024diffusion,furrutter2025multimodal}, shown in Fig.~\ref{fig:representation}.
A circuit becomes a tensor of $Q$ rows and $D$ columns, one row per qubit and one column per timestep, so gates on different qubits share a column, and $Q$ and $D$ are the width and the depth of the circuit.
Every cell holds a token naming the gate that acts on that qubit at that timestep.
A $\mathrm{CX}$ occupies two cells of one column and its control and target are separate tokens, so together with the five one-qubit gates of~\eqref{eq:gateset} and the empty cell, there are eight tokens in total. 
Each token is then replaced, as in~\cite{furrutter2024diffusion}, by a fixed vector of $c = 9$ channels, the vectors mutually orthogonal and each of zero mean and unit variance, giving a tensor of shape $Q \times D \times c$.
Decoding assigns every cell the token whose vector has the highest cosine similarity with it.
A tensor decodes to a valid circuit only if every control has exactly one target in its own column and every target one control.

\subsection{Circuit-to-circuit bridge}
\label{sec:formulation}
\begin{figure*}[!t]
\centering
\definecolor{cellhi}{HTML}{C6C7EF}
\definecolor{srcink}{HTML}{2A6F4E}
\footnotesize
\adjustbox{width=\linewidth}{%
\begin{tikzpicture}[
  x=1cm, y=1cm, font=\sffamily\small,
  flow/.style={-{Stealth[length=4pt]}, line width=0.5pt, black!70},
  skip/.style={-{Stealth[length=4pt]}, line width=0.5pt, black!55, densely dashed},
]
\def\bh{1.1}

\colorlet{topface}{cellhi!55!white}
\colorlet{sideface}{cellhi!78!black}
\foreach \x/\w/\tl/\d in {%
  0.00/2.20/{$D$}/0.12,
  2.66/1.55/{$D/2$}/0.24,
  5.66/0.60/{$D/16$}/0.36,
  7.82/1.55/{$D/2$}/0.24,
  9.95/2.20/{$D$}/0.12}{%
  \fill[topface,draw=black!45,line width=0.4pt]
       (\x,\bh) -- ++(\w,0) -- ++(\d,\d) -- ++(-\w,0) -- cycle;
  \fill[sideface,draw=black!45,line width=0.4pt]
       ({\x+\w},0) -- ++(0,\bh) -- ++(\d,\d) -- ++(0,-\bh) -- cycle;
  \fill[cellhi] (\x,0) rectangle ++(\w,\bh);
  \foreach \r in {1,...,7}{\draw[white,line width=0.5pt] (\x,{\r*\bh/8}) -- ++(\w,0);}
  \draw[black!45,line width=0.4pt] (\x,0) rectangle ++(\w,\bh);
  \node[anchor=south,inner sep=1.5pt,align=center,font=\sffamily\small]
        at ({\x+\w/2+\d},{\bh+\d+0.04}) {\tl};
}
\foreach \x in {4.91, 7.14} \node[font=\sffamily\small,black!55] at (\x,{\bh/2}) {$\cdots$};

\draw[{Stealth[length=3pt]}-{Stealth[length=3pt]},line width=0.4pt,black!65]
      (0.22,0.07) -- (0.22,{\bh-0.07});
\node[anchor=west,inner xsep=2pt,black!65,font=\sffamily\small] at (0.28,{\bh/2}) {$Q$ qubits};
\draw[{Stealth[length=3pt]}-{Stealth[length=3pt]},line width=0.4pt,black!65]
      (0.02,-0.14) -- (2.18,-0.14);
\node[anchor=north,inner ysep=2pt,black!65,font=\sffamily\small] at (1.10,-0.16) {$D$ columns};

\draw[{Stealth[length=3pt]}-{Stealth[length=3pt]},line width=0.4pt,black!65]
      (6.32,-0.14) -- ++(0.36,0.36);
\node[inner sep=1pt,black!65,font=\sffamily\small] at (6.63,-0.09) {$c$};

\foreach \a/\b in {2.32/2.66, 4.45/4.71, 5.14/5.66, 6.62/6.94, 7.37/7.82, 9.61/9.95}
  \draw[flow] (\a,{\bh/2}) -- (\b,{\bh/2});

\foreach \a/\b/\y/\h in {2.07/10.24/1.22/1.55, 4.26/8.22/1.34/1.00}
  \draw[skip] (\a,\y) .. controls (\a,{\y+\h}) and (\b,{\y+\h}) .. (\b,\y);
\node[anchor=south,inner sep=1pt,black!55,font=\sffamily\normalsize] at (6.15,2.48)
      {skip connections};

\node[anchor=east,inner xsep=2pt,font=\sffamily\small,align=right] at (-1.55,{\bh/2+0.36})
      {$x_t$: partly optimized\\grid, $Q\!\times\!D\!\times\!c$};
\node[anchor=east,inner xsep=2pt,srcink,font=\sffamily\small,align=right] at (-1.55,{\bh/2-0.40})
      {$x_1$: source\\circuit};
\draw[flow] (-1.50,{\bh/2+0.36}) -- (-1.05,{\bh/2+0.36}) -- (-1.05,{\bh/2});
\draw[flow,srcink] (-1.50,{\bh/2-0.40}) -- (-1.05,{\bh/2-0.40}) -- (-1.05,{\bh/2});
\draw[flow] (-1.05,{\bh/2}) -- (0,{\bh/2});
\node[anchor=south,inner sep=1.5pt,black!70,font=\sffamily\small] at (-0.52,{\bh/2+0.04}) {concat};
\draw[flow] (12.27,{\bh/2}) -- (12.74,{\bh/2});
\node[anchor=west,inner xsep=2pt,font=\sffamily\small,align=left] at (12.70,{\bh/2})
      {$\epsilon_\theta(x_t,t,x_1)$:\\predicted noise\\grid, $Q\!\times\!D\!\times\!c$};
\end{tikzpicture}}
\caption{\textbf{The denoiser}, drawn as the tensor shape $Q\times D\times c$ at each stage. The tensor of \S\ref{sec:representation} enters as $x_t$, with the source $x_1$ concatenated along the channel axis.
Four downsamplings merge neighboring columns and widen the channels ($192, 256, 384, 512, 768$ over five levels), four mirror levels undo them, and the qubit axis is never reduced. 
The output is the predicted noise $\epsilon_\theta(x_t,t,x_1)$, from which \eqref{eq:x0hat} gives $\hat{x}_0$.}
\Description{}
\label{fig:backbone}
\end{figure*}

We instantiate the paired bridge of \S\ref{sec:bg-sb} on these tensors, with $x_0$ the tensor of the optimized circuit and $x_1$ that of its source.
The only change is the conditioning: the denoiser has to know which source it is optimizing, so $\epsilon_\theta(x_t,t)$ becomes $\epsilon_\theta(x_t,t,x_1)$, with $x_1$ concatenated to $x_t$ along the channel axis, which gives $2c$ channels and puts the source at every position of the tensor.

Training is otherwise that of~\eqref{eq:sbloss}, with $t$ discretized into $256$ steps and $\beta_t$ from the symmetric quadratic schedule of~\citet{liu2023i2sb}, which is largest at the midpoint, where the state is furthest from either endpoint.
Generation runs from $t=1$, where the state is the source, down to $t=0$.
Rearranging the training target of~\eqref{eq:sbloss} turns the denoiser output into an estimate of the optimized circuit,
\begin{equation}
  \hat{x}_0 = x_t - \sigma_{\mathrm{fwd}}(t)\,\epsilon_\theta(x_t,t,x_1),
  \label{eq:x0hat}
\end{equation}
and the state at the next step $s < t$ is drawn from the bridge posterior between $x_t$ and that estimate,
\begin{equation}
  \begin{aligned}
    x_s &\sim \mathcal{N}\!\left(
      \lambda\hat{x}_0 + (1-\lambda)x_t,\;
      \lambda\,\sigma_{\mathrm{fwd}}(s)^2 I\right), \\
    \lambda &= 1 - \frac{\sigma_{\mathrm{fwd}}(s)^2}{\sigma_{\mathrm{fwd}}(t)^2},
  \end{aligned}
  \label{eq:sample}
\end{equation}
where $\lambda$ is the fraction of the noise accumulated by $t$ that this step
removes.
The chain stays continuous throughout, with nothing projecting $\hat{x}_0$ onto the token vectors between steps.
Consecutive steps need not be adjacent, so it is subsampled to $128$ of the $256$ steps~\cite{zheng2024dbim} at $\eta=1$, where that sampler is exactly~\eqref{eq:sample}.
The loop runs over the schedule and not over gates or cells, so a candidate costs those $128$ denoiser evaluations whatever the circuit holds, with only the cost of one evaluation growing with its length.
Each run is stochastic, so $64$ candidates are drawn per source circuit and verified (\S\ref{sec:verification}).
Beyond the model's attention span, the same procedure runs on blocks and the merged circuit is verified (\S\ref{sec:scaling}).

\subsection{Denoiser architecture}
\label{sec:denoisers}
We build on U-DiT, a U-shaped diffusion transformer for images~\cite{tian2024udit}.
Its encoder levels halve both spatial axes and widen the channels, a mirrored decoder undoes them, and skip connections join matching levels.
Each level runs a stack of blocks comprising attention and then a feed-forward network, both conditioned on the denoising step through adaLN-Zero~\cite{peebles2023dit}.
Attention is downsampled as well.
The token map is split into four $2\times$-downsampled phases.
Each attends within itself, and the results are merged at a quarter of the cost of full attention.
Both the levels and the attention therefore downsample the qubit axis as well as time, which \S\ref{sec:objectives} rules out for circuits.
Table~\ref{tab:backbones} reports the unmodified U-DiT and a standard diffusion U-Net trained on the same pairs.

\paragraph{Time-only downsampling.}
The column index is a coordinate in time, but the row index is only a label.
Qubit order is arbitrary and a $\mathrm{CX}$ acts on an arbitrary pair of rows, so pooling adjacent rows imposes a locality the circuit does not have (\S\ref{sec:towards}).
We downsample the time axis only (Fig.~\ref{fig:backbone}).
Within each row, adjacent columns are concatenated in pairs, $[B,Q,D,C_\ell] \to [B,Q,D/2,2C_\ell]$. 
A linear layer applied at every position projects $2C_\ell$ to the next level's width $C_{\ell+1}$.
These widths are $192, 256, 384, 512, 768$.
The decoder undoes the projection and the concatenation, then merges the skip through a learned per-channel gate and projects back to $C_\ell$.
Neither step ever combines two rows, so a token at level $\ell$ is $2^{\ell}$ consecutive timesteps of a single qubit, and all $Q$ rows reach level $4$, the bottleneck.
Pooling rows would also fix $Q$ into the architecture, whereas downsampling time alone leaves it a free dimension, which is what makes the $16$-qubit models of \S\ref{sec:scaling} a finetune rather than a retrain.
Attention encodes position with rotary embeddings (RoPE)~\cite{su2024rope}, indexed by absolute circuit timestep rather than by position within the level.
The token at index $j$ of level $\ell$ takes position $j\cdot2^{\ell}$ and not $j$, so a cropped grid is the prefix of a full one.

\paragraph{Permutation equivariance.}
Nothing described so far reads the row index, so the network is equivariant under permutations of the qubit axis. 
Two rows carrying the same content receive the same prediction, so where the output must distinguish them, as a $\mathrm{CX}$ must pick one of a pair as its control, the equivariant answer averages the two and decodes to neither.
We therefore add qubit identity from a learned table of $64$ rows and width $C_\ell$, one per level.
The level-$0$ table enters at the input.
The deeper tables enter at every stage that runs at their level and are zero-initialized, so the network begins carrying qubit identity from the input alone.
Equivariance is then learned rather than built in, from the qubit permutations appended to every training pair (\S\ref{sec:curriculum}).
The table is sized for $64$ qubits. 
Training has used eight, or sixteen after the finetunes of \S\ref{sec:scaling}.

\paragraph{Grouped attention.}
Level $\ell$ holds $D_\ell = D/2^{\ell}$ columns, cut into $f = D_\ell/g$ groups of $g$ each, with $f$ rounded down to a divisor of $D_\ell$ and $f=1$ where the level is shorter than $g$.
A token attends to the $gQ$ tokens of its own group and to none outside it, so with $g$ fixed the cost of attention grows linearly with $D$ rather than quadratically.
A token at level $\ell$ covers $2^{\ell}$ timesteps, so one group reaches $g\cdot2^{\ell}$ timesteps of the circuit, twice as many at each level down.
We set $g$ per network, but $f$ is derived per call, so the same weights attend globally on a short circuit and in bands on a long one.
At $D=64$, $g=64$ makes every level a single group.
At $D=512$, $g=32$ reaches $32$ timesteps at level $0$ and the whole circuit at level $4$.
Full attention at $D=512$ would instead cost $2.7\times$ as much per evaluation.
Consecutive blocks alternate a contiguous window~\cite{liu2021swin} with U-DiT's interleaved split~\cite{tian2024udit}.
Here the split uses every $f$-th column rather than a fixed four-way split of both axes.
Similarity is cosine with a learned per-head scale~\cite{liu2022swinv2}.
Every block keeps U-DiT's depthwise convolutions before attention and inside the feed-forward network, here along time only.
Their width is $9$ at level $0$, $7$ at level $1$, and $5$ below, with one more before each downsampling and after each upsampling.

\paragraph{Size and cost.}
The encoder and the decoder each run six blocks at level $0$ and four at levels $1$ to $3$, with four more at the bottleneck, giving $40$ blocks and $98.8$M parameters.
The $D=64$ and the $D=512$ networks share this count and differ only in $g$.
Parameters concentrate at the bottleneck: the twelve blocks at level $0$ hold $6.4\%$ of them against $33.5\%$ for the four at the bottleneck, sixteen times fewer per block.
Compute is spread far more evenly: each transition halves the token count while roughly doubling $C^2$, so everything but attention costs about the same at both ends of the U.
Attention is the exception, four times larger at level $0$, where a block costs $3.2$ GFLOP against $2.6$ at the bottleneck.
The twelve fine-level blocks are therefore the most expensive stage, taking $34\%$ of the $113$ GFLOP for a full $8\times512$ grid against $9\%$ for the bottleneck.
The extra depth we run at the finest level therefore costs time rather than parameters.
Because the U has four transitions, $D$ must be divisible by $16$.
Appendix~\ref{app:reprod} gives the evaluation commands and the baseline settings, and the code release the remaining flags.

\subsection{Training}
\label{sec:training}

All models use the same training recipe.
We use AdamW with $(\beta_1,\beta_2)=(0.9,0.95)$, no weight decay, and gradients clipped at norm $1$.
The learning rate peaks at $4\times10^{-4}$ after $2{,}000$ warmup steps, then cosine-decays to one tenth.
The global batch is $512$ pairs, and the loss of~\eqref{eq:sbloss} is averaged over the columns a source circuit occupies and the next $16$ columns, so the remaining padding is never trained on.
We set $\beta_{\max}=12$, so the schedule of \S\ref{sec:formulation} accumulates a total variance of $1.20$ and $\sigma_t$ peaks at $0.55$ at the midpoint of the bridge, against token vectors of unit variance.
Circuits are batched by length, each batch cropped to one of a fixed set of lengths, two at $D=64$ and eight from $32$ to $512$ columns at $D=512$.
Cropping removes $6.2\times$ of the padded work at $D=512$ and needs no change to the network, which derives $f$ from the length it is handed and positions tokens by absolute column (\S\ref{sec:denoisers}).
We validate and sample from an exponential moving average of the weights with decay $0.9995$.
We report the last checkpoint rather than selecting one, since validation loss selects badly here.
Training runs to $150{,}000$ steps.

\subsection{Equivalence checking}
\label{sec:verification}

No equivalence check enters the training loop, since equivalence is established when each training pair is generated (\S\ref{sec:reverse}, \S\ref{sec:curriculum}).
At inference we decode the $64$ candidates drawn for a source circuit, rank them by gate count with invalid decodings last, and accept the first that passes verification.
A circuit with no accepted candidate is reported at its source length, so nothing we report rests on an unverified circuit.

Two circuits are equivalent when they implement the same unitary up to a global phase.
With $W = U(C)^\dagger U(C')$ and $\tau = \mathrm{Tr}(W)/2^{Q}$, that is exactly the condition $|\tau| = 1$, and $1 - |\tau|^{2}$ is the process infidelity.
Up to $Q = 10$ we build both matrices and read $\tau$ off directly, which settles every pair as equivalent or different.

At $Q = 16$ that matrix has $2^{32}$ entries, so we never build it and check the \texttt{q16} datasets with a \emph{tiered} pipeline instead: a sequence of tests on the ZX diagram of $W$, cheapest first, each of which either answers or declines because the stabilizer decomposition it would need exceeds a fixed budget.
The first test to answer decides the pair, and a pair no test answers is undecided.
The cheapest rewrites the diagram with PyZX~\cite{kissinger2020pyzx} and asks whether it reduces to the identity, which proves equivalence when it succeeds and says nothing when it fails.
The second closes the diagram by wiring each output onto its own input, so that its scalar is $\mathrm{Tr}(W)$ and $\tau$ follows exactly.
A pair is undecided as well when a circuit uses a gate the pipeline cannot translate, or when it exceeds a wall-clock limit of $120$ seconds.
Undecided candidates are not accepted, so the rates of \S\ref{sec:scaling} are lower bounds (Table~\ref{tab:grids}).
In Appendix~\ref{app:equiv}, we summarize another approach available to BOPS for definitively answering equivalence checks through classical simulation with stabilizer decompositions~\cite{Qassim2021alpha,kissinger2022stabdecomp}, which, due to a very fast GPU-parallelized implementation~\cite{sutcliffe2024paramzx}, we would expect to comprise an inconsequential fraction of the total generative inference time.

\section{Training Data}
\label{sec:data}
\begin{figure*}[t]
\centering
\includegraphics[width=\linewidth]{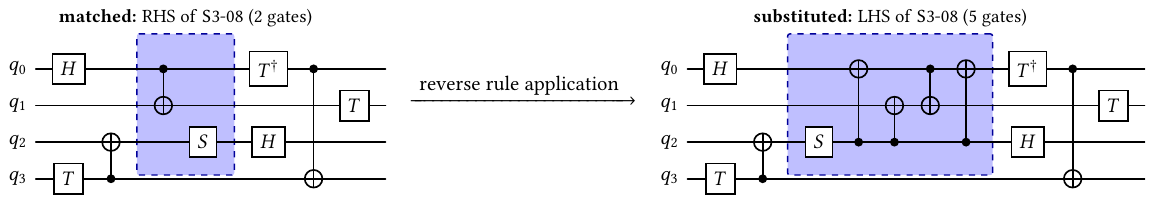}
\caption{\textbf{Reverse rule application.} The SAT-mined rule \textsf{S3-08} (Fig.~\ref{fig:sat-3q}) read right to left: its boxed two-gate right-hand side, matched on $q_0,q_1,q_2$ of a four-qubit circuit, is replaced by its five-gate left-hand side.
}
\Description{Two four-qubit circuit diagrams side by side. In the left one a dashed box marks two gates on qubits zero to two, the right-hand side of rule S3-08. In the right one that box has been replaced by a wider dashed box holding five gates, the rule read backwards, leaving the rest of the circuit unchanged.}
\label{fig:reverse-rewrite}
\end{figure*}

BOPS trains on pairs of a source circuit and an optimized equivalent, so data generation must scale to millions of pairs. 
Building them with existing optimizers has two limitations.
Their targets contain only the reductions those optimizers already recover, so anything beyond would have to come from generalizing across their outputs, and search-based tools take seconds per circuit. 
Finding a minimum-size or minimum-depth equivalent over Clifford$+T$ is NP-hard~\cite{vanDeWeteringAmy2024}, while a heuristic that does not preserve equivalence by construction must verify every accepted candidate.
We therefore construct exact pairs largely by construction.

\subsection{Reverse generation}
\label{sec:reverse}

We generate most pairs in reverse from rewrite rules on one to three qubits.
A rule relates two gate sequences implementing the same operator. 
Applying it from longer to shorter reduces a circuit, and the opposite direction is a \emph{reverse rewrite} (Fig.~\ref{fig:reverse-rewrite}).
Generation has three steps (Fig.~\ref{fig:bops_architecture}): draw a small random circuit, reduce it under the rules, and repeatedly apply reverse rewrites, some nested inside earlier ones.
The reduced circuit becomes the target and the expanded circuit the source. 
Because every rule is verified in advance, they are equivalent by construction and generation never builds a circuit unitary.
Circuit unoptimization~\cite{mori2025unoptimization} expands circuits in the same spirit, inserting two-qubit unitaries and their inverses, but to build compiler benchmarks rather than labeled pairs.

Target reduction and source expansion use different rule sets.
The \emph{expansion} set contains $86$ rules: $36$ handwritten and $50$ solver-mined~\cite{peham2023satclifford}; $83$ shorten circuits, three only rearrange them, and $16$ of the reducing rules have an empty right-hand side.
The \emph{reduction} set combines the same $36$ handwritten rules with $2{,}727$ mined rules, for $2{,}763$ total.
Both sides of every rule are compared as dense unitaries at the rule's minimum width up to global phase, so an invalid rule fails before generating data.

The smaller expansion set produces stronger sources.
Reversing one of its rules adds $3.9$ gates on average, against $1.35$ for a mined rule. 
Weaker expansions produce sources that existing optimizers already solve.
During expansion, matched rewrites are preferred over empty-right-hand-side insertions, which create standalone identity gadgets.
Each gate may then move up to eight positions past gates on disjoint qubits, preserving the operator while separating the rewritten gates in the circuit grid (Appendix~\ref{app:rules}).
These choices embed and disperse the added structure without changing the represented operator.

The target generator determines the distribution BOPS learns.
It combines uniformly random gates, random motifs, and repeated copies of one motif to represent recurring algorithmic structure (Fig.~\ref{fig:motifs}).
Motifs are small subcircuits such as Toffoli decompositions, parity ladders, and phase gadgets. Repetition applies one motif to several groups of qubits in sequence.
The resulting targets contain both unstructured gates and patterns repeated in compiled programs.

\subsection{Dataset composition}
\label{sec:curriculum}

Seven procedures contribute the following shares of the \texttt{q8t64} training split:

\begin{itemize}
\item \emph{Atomic} applies one reverse rewrite ($5.9\%$).
\item \emph{Few}, \emph{medium}, and \emph{hard} apply $2$--$3$, $4$--$8$, and $9$--$20$ reverse rewrites ($17.9\%$, $20.4\%$, and $19.3\%$). Each rewrite is aimed inside an earlier span with probabilities $0.25$, $0.5$, and $0.8$, respectively, increasing both rewrite count and nesting.
\item \emph{Chain} concatenates several independently generated source--target pairs ($22.9\%$). Its sources average $54.2$ gates against $41.9$ overall, increasing length without increasing the number of rewrites within each constituent pair.
\item \emph{Natural} samples circuits randomly from the target generator and uses them as sources rather than targets, labeling each with the shortest output of the nine used optimizers ($9.0\%$).
\item \emph{Clifford} omits $T$ gates, allowing each target to be reconstructed from the source's stabilizer tableau rather than reduction rules ($4.6\%$).
\end{itemize}
Together, they vary rewrite count and nesting, circuit length, source distribution, and gate class.

\subsection{Resulting datasets}
\label{sec:datasets}

The main corpus, \texttt{q8t64}, uses a $Q=8$ by $D=64$ grid with circuits spanning three to eight qubits.
It contains $1{,}953{,}433$ training pairs and $3{,}656$ pairs in each validation and test split.
Sources average $41.9$ gates over $24.9$ occupied columns, against $18.0$ gates for targets. 
The $64$-column source limit caps expansion at $20$ reverse rewrites.

The targets are shorter than the seven-configuration ensemble's outputs on $56.7\%$ of test circuits and longer on $7.7\%$.
Including QUESO and GUOQ changes these rates to $33.9\%$ and $23.1\%$.
These rates motivate reporting both \emph{target reached} and \emph{target beaten}: targets expose unrecovered reductions but are not assumed optimal.

No exact source or target appears in multiple procedure outputs; source deduplication retains one target per source.
Qubit relabeling is the only near-duplicate not excluded: $9$ of the $7{,}312$ evaluation sources relabel a training source.
An additional sample check constructed both unitaries and found every sampled pair equivalent.

Two larger datasets support the scaling experiments of \S\ref{sec:scaling}.
\texttt{q8t512} applies the same seven procedures on a $Q=8$ by $D=512$ grid, accommodating $200$ reverse rewrites; its test sources average $164$ gates over $105$ columns.
\texttt{q16t192} uses $Q=16$, $D=192$, and $1{,}097{,}747$ training pairs but omits \emph{natural}; it is the only corpus whose targets the ensemble never shortens.

\begin{table*}[t]
\caption{
\texttt{q8t64} test split, $n=2{,}686$.
\emph{target reached}/\emph{beaten}: fraction of outputs matching or improving on the source's corpus target.
\emph{cost}: seconds per circuit, one CPU core for the optimizers, one GH200 for BOPS.
}
\label{tab:q8t64}
\footnotesize
\centering
\begin{tabular}{@{}lrrrrrrr@{}}
\toprule
system & gates reduced by & depth reduced by & improved & gap closed & target reached & target beaten & cost (s) \\
\midrule
  Qiskit~O3 & $1.22\times$ & $1.17\times$ & 88.4\% & 32.0\% & 4.4\% & 0.1\% & 0.03 \\
  PyZX-b & $1.76\times$ & $1.67\times$ & 94.2\% & 65.8\% & 12.6\% & 1.9\% & 0.03 \\
  PyZX-fr & $1.99\times$ & $1.97\times$ & 81.4\% & 63.3\% & 21.5\% & 1.4\% & 0.03 \\
  PyZX-fo & $1.86\times$ & $1.86\times$ & 73.8\% & 54.7\% & 17.9\% & 0.5\% & 0.04 \\
  tket & $1.09\times$ & $1.07\times$ & 64.8\% & 13.3\% & 0.2\% & 0.0\% & 0.03 \\
  VOQC & $1.27\times$ & $1.21\times$ & 91.4\% & 36.6\% & 5.5\% & 0.1\% & 0.03 \\
  T$|$zap$\rangle$ & $1.28\times$ & $1.21\times$ & 92.6\% & 38.1\% & 4.3\% & 0.4\% & 0.04 \\
  QUESO & $2.11\times$ & $2.11\times$ & 94.5\% & 86.0\% & 26.1\% & \textbf{14.6\%} & 5.09 \\
  GUOQ & $1.28\times$ & $1.26\times$ & 41.7\% & 33.2\% & 9.9\% & 4.5\% & 4.17 \\
\midrule
  corpus target & $2.66\times$ & $2.72\times$ & 100.0\% & 100.0\% & 100.0\% & 0.0\% & \tbd \\
\midrule
  BOPS, whole & $2.28\times$ & $2.26\times$ & 88.9\% & 88.8\% & \textbf{78.6\%} & 4.8\% & 4.8 \\
  BOPS, windowed & $\boldsymbol{2.46\times}$ & $\boldsymbol{2.45\times}$ & \textbf{96.5\%} & \textbf{95.2\%} & 77.2\% & 11.7\% & 14.4 \\
\bottomrule
\end{tabular}
\end{table*}

\section{Experimental Setup}
\label{sec:setup}

\paragraph{Hardware and software.}
Training and inference run on Isambard-AI~\cite{isambard2025}, on eight NVIDIA GH200 GPUs across two nodes under distributed data parallelism, in bfloat16 with \texttt{torch.compile} on PyTorch $2.11$, CUDA $12.8$ and Python $3.12$, peaking at $10.4$~GiB per GPU for the $D=64$ network and $59.4$ for the $D=512$ one.
Dataset generation runs on $72$-core CPU nodes and every baseline on one core, one circuit at a time, so a per-circuit cost is a single-core cost.

\paragraph{Model and training.}
We evaluate the network of \S\ref{sec:denoisers} at $98.8$ million parameters, trained with the recipe of \S\ref{sec:training} for $150{,}000$ steps at $135$\,ms per step, or $45$ GPU-hours.
The $8\times512$ network is the same architecture for $300{,}000$ steps at $153$\,ms ($102$ GPU-hours), and the $16$-qubit models are $50{,}000$-step finetunes at $63$ and $94$\,ms on $16\times64$ and $16\times192$ ($7$ and $10$ GPU-hours).

\paragraph{Data and benchmarks.}
We use the \texttt{q8t64} splits of \S\ref{sec:datasets}, a $Q=8$ by $D=64$ grid over the gate pool of~\eqref{eq:gateset}, scoring the four rule-generated components \emph{atomic}, \emph{few}, \emph{medium} and \emph{hard}, $2{,}686$ of the $3{,}656$ test sources, unless stated otherwise.
Beyond the test split we evaluate on $228$ vendored flat-QASM circuits that prior optimizer papers report, counted once each where the suites overlap: $69$ Feynman~\cite{amy2014tpar,nam2018automated}, $124$ GUOQ~\cite{guoq}, $31$ QASMBench, three FTCircuitBench and one QUESO~\cite{queso}, with $34$ of the Feynman entries \texttt{feynopt -O3} reductions of the other $35$.
The benchmark tables score the $68$ that fit the $8\times64$ grid and the $119$ that fit $8\times512$.

\paragraph{Baselines.}
We compare against nine configurations of seven optimizers, run on the same circuits and verified exactly as ours are: Qiskit \texttt{transpile} at level three; PyZX \texttt{basic\_optimization}, \texttt{full\_reduce}, and \texttt{full\_optimize}, the last including TODD; tket \texttt{Remove\allowbreak Redundancies}; the VOQC Clifford$+T$ passes; T$|$zap$\rangle$; QUESO; and GUOQ. GUOQ adds BQSKit resynthesis to QUESO.
QUESO and GUOQ run from the GUOQ distribution, QUESO with resynthesis disabled, under a $5$~s search budget and without GUOQ's default $100$-gate cap; T$|$zap$\rangle$ runs under a $60$~s per-circuit timeout and the remaining six to completion.
Stronger tket passes are excluded because they answer through general TK1 rotations, in a different gate set.
Seven of the nine answer on every test circuit; Qiskit returns a non-equivalent circuit on $0.2\%$ and GUOQ on $0.5\%$, both rejected by verification, and GUOQ returns nothing on another $21.6\%$ after its resynthesis server became unreachable, scored as unchanged, so its row is a lower bound. The exact settings for all nine configurations are in Appendix~\ref{app:extlib}.

\paragraph{Inference.}
Whole-circuit mode draws $64$ candidates at $128$ denoising steps each, ranks them by gate count, and verifies them in that order until finding an exactly equivalent candidate (\S\ref{sec:verification}).
That is $8{,}192$ denoiser evaluations per circuit whatever its size, so the sampling budget is fixed.
Windowed mode applies the same procedure to $32$-column blocks and keeps the shorter of the whole-circuit draw and the merged windows (\S\ref{sec:scaling}).

\paragraph{Metrics.}
A system's \emph{reduction}, reported as \emph{gates} and \emph{depth reduced by}, is the geometric mean over circuits of source cost over output cost, so $2.00\times$ means a typical output is half as long and every circuit counts once whatever its size.
The \emph{improvement rate}, reported as \emph{improved}, is the fraction of circuits whose output is strictly shorter than the source and exactly equivalent to it.
Against each dataset target we report the \emph{gap closed}, $(\mathrm{src}-\mathrm{out})/(\mathrm{src}-\mathrm{target})$, with \emph{target reached} and \emph{target beaten}, the fractions matching or improving on the target length, which is not a ceiling, so gap closed can exceed one (\S\ref{sec:datasets}).
The per-circuit benchmark tables give \emph{src}, the source gate count, and \emph{cut}, the gates the best of the nine removes.
A circuit with no verified improvement is reported at its source length, for BOPS and every baseline alike.
Two models trained on the same data with different seeds differ by $1.6$ percentage points in improvement rate, which sets the resolution of every comparison below.

\section{Results and Discussion}
\label{sec:eval}

\begin{table*}[t]
\caption{Each dataset on the rule-generated part of its own test split.
\emph{undecided}: fraction of sources that ended with no verified candidate and at least one the checker could not settle, so their failure may be the checker's.
The last row is the \texttt{q16t64} model scored on the $8$-qubit test set, not a further training run.}
\label{tab:grids}
\footnotesize
\centering
\begin{tabular}{@{}llrrrrrrr@{}}
\toprule
& & & \multicolumn{3}{c}{whole} & \multicolumn{3}{c}{windowed} \\
\cmidrule(lr){4-6}\cmidrule(lr){7-9}
dataset & init & $n$ & undecided & improved & gates reduced by & undecided & improved & gates reduced by \\
\midrule
\texttt{q8t64} ($8\times64$)      & scratch  & 2{,}686 & 0.0\%  & 88.9\% & $2.28\times$ & 0.0\%  & 96.5\% & ${2.46\times}$ \\
\texttt{q16t64} ($16\times64$)    & finetune & 3{,}105 & 7.7\%  & 89.3\% & $1.84\times$ & 3.3\%  & 96.2\% & $1.97\times$ \\
\texttt{q16t192} ($16\times192$)  & finetune & 3{,}136 & 23.9\% & 72.0\% & $1.78\times$ & 1.6\%  & 98.4\% & $ {2.24\times}$ \\
\texttt{q8t512} ($8\times512$)    & scratch  & 4{,}938 & 0.0\%  & 45.4\% & $1.61\times$ & 0.0\%  & 88.7\% & $1.98\times$ \\
\midrule
\texttt{q16t64} on \texttt{q8t64} &          & 2{,}686 & 0.0\%  & 91.2\% & $2.06\times$ & 0.0\%  & {98.8\%} & $2.29\times$ \\
\bottomrule
\end{tabular}
\end{table*}

\begin{table}[t]
\caption{
Denoiser architectures trained on \texttt{q8t64} for $150{,}000$ steps.
Parameter counts cannot be matched exactly, as each architecture constrains its own width.
\emph{improved}: percentage of the evaluated circuits where the model returns a shorter output circuit (\S\ref{sec:quality}).
}
\label{tab:backbones}
\footnotesize
\centering
\begin{tabular}{@{}lrrr@{}}
\toprule
& parameters & steps/s & improved \\
\midrule
BOPS U-DiT    &  98.8\,M &  7.40 & \textbf{88.9\%} \\
UNet           &  90.0\,M & 20.67 & 40.6\% \\
plain U-DiT~\cite{tian2024udit}   & 102.9\,M &  4.71 & 76.2\% \\
\bottomrule
\end{tabular}
\end{table}

\begin{table}[tp]
\caption{Benchmark circuits shortened by BOPS, plus those shortened by none of nine baselines. Bold: BOPS matches or beats the best baseline. \emph{src}/\emph{cut}: source/removed gates; \emph{in}: QASMBench (B), Feynman (F), GUOQ (G), or QUESO (Q). Appendix Table~\ref{tab:bench-full} adds depth and the best optimizer. BOPS uses $128$ candidates over five windowed rounds.}
\label{tab:bench}
\fontsize{8.2pt}{8.4pt}\selectfont
\newcommand{\benchcode}[1]{{\fontsize{8.1pt}{8.1pt}\selectfont #1}}
\setlength{\tabcolsep}{3pt}
\centering
\begin{tabular}{@{}lcrrrr@{}}
\toprule
& & & & \multicolumn{2}{c}{gates} \\
\cmidrule(lr){5-6}
circuit & in & src & cut & best & BOPS \\
\midrule
  lpn\_n5 & \benchcode{B} & 11 & 8 & $3.67\times$ & $\boldsymbol{3.67\times}$ \\
  bb84\_n8 & \benchcode{B} & 54 & 34 & $2.70\times$ & $\boldsymbol{3.00\times}$ \\
  qec\_en\_n5 & \benchcode{B} & 25 & 15 & $2.50\times$ & $2.27\times$ \\
  hs4\_n4 & \benchcode{B} & 40 & 24 & $2.50\times$ & $2.00\times$ \\
  iswap\_n2 & \benchcode{B} & 12 & 4 & $1.50\times$ & $\boldsymbol{1.50\times}$ \\
  rd32-v1\_68 & \benchcode{G} & 42 & 21 & $2.00\times$ & $1.45\times$ \\
  deutsch\_n2 & \benchcode{B} & 8 & 2 & $1.33\times$ & $\boldsymbol{1.33\times}$ \\
  rd32-v0\_66 & \benchcode{G} & 34 & 15 & $1.79\times$ & $1.31\times$ \\
  mod5\_4 & \benchcode{FGQ} & 66 & 29 & $1.78\times$ & $1.25\times$ \\
  ham3\_102 & \benchcode{GQ} & 20 & 4 & $1.25\times$ & $1.18\times$ \\
  barenco\_tof\_3 & \benchcode{FGQ} & 60 & 16 & $1.36\times$ & $1.15\times$ \\
  4mod5-v1\_22 & \benchcode{GQ} & 24 & 4 & $1.20\times$ & $1.14\times$ \\
  tof\_4 & \benchcode{FGQ} & 75 & 15 & $1.25\times$ & $1.12\times$ \\
  qiskit-alu-v0\_26 & \benchcode{G} & 87 & 17 & $1.24\times$ & $1.12\times$ \\
  decod24-v0\_38 & \benchcode{G} & 54 & 10 & $1.23\times$ & $1.10\times$ \\
  alu-bdd\_288 & \benchcode{G} & 87 & 26 & $1.43\times$ & $1.10\times$ \\
  alu-v0\_26 & \benchcode{G} & 87 & 17 & $1.24\times$ & $1.10\times$ \\
  barenco\_tof\_4 & \benchcode{FGQ} & 114 & 28 & $1.33\times$ & $1.10\times$ \\
  4gt13\_91 & \benchcode{G} & 103 & 25 & $1.32\times$ & $1.10\times$ \\
  4gt13-v1\_93 & \benchcode{G} & 74 & 12 & $1.19\times$ & $1.09\times$ \\
  4mod5-v0\_18 & \benchcode{G} & 75 & 32 & $1.74\times$ & $1.09\times$ \\
  decod24-bdd\_294 & \benchcode{G} & 88 & 37 & $1.73\times$ & $1.09\times$ \\
  alu-v0\_27 & \benchcode{GQ} & 39 & 4 & $1.11\times$ & $1.08\times$ \\
  4mod5-bdd\_287 & \benchcode{G} & 79 & 18 & $1.30\times$ & $1.08\times$ \\
  rd32\_270 & \benchcode{G} & 93 & 46 & $1.98\times$ & $1.08\times$ \\
  alu-v3\_35 & \benchcode{GQ} & 40 & 4 & $1.11\times$ & $1.08\times$ \\
  alu-v4\_37 & \benchcode{GQ} & 40 & 4 & $1.11\times$ & $1.08\times$ \\
  4gt11\_82 & \benchcode{Q} & 27 & 3 & $1.12\times$ & $1.08\times$ \\
  mod5d2\_64 & \benchcode{G} & 56 & 11 & $1.24\times$ & $1.08\times$ \\
  alu-v1\_29 & \benchcode{GQ} & 43 & 2 & $1.05\times$ & $\boldsymbol{1.07\times}$ \\
  alu-v2\_33 & \benchcode{GQ} & 43 & 2 & $1.05\times$ & $\boldsymbol{1.07\times}$ \\
  4mod5-v1\_23 & \benchcode{G} & 72 & 38 & $2.12\times$ & $1.07\times$ \\
  4gt5\_76 & \benchcode{G} & 91 & 19 & $1.26\times$ & $1.07\times$ \\
  simon\_n6 & \benchcode{B} & 62 & 48 & $4.43\times$ & $1.07\times$ \\
  4gt13\_92 & \benchcode{GQ} & 66 & 13 & $1.25\times$ & $1.06\times$ \\
  alu-v3\_34 & \benchcode{GQ} & 55 & 6 & $1.12\times$ & $1.06\times$ \\
  mod5mils\_65 & \benchcode{GQ} & 38 & 9 & $1.31\times$ & $1.06\times$ \\
  4mod5-v0\_19 & \benchcode{GQ} & 38 & 5 & $1.15\times$ & $1.06\times$ \\
  mod5\_4\_\_o3 & \benchcode{F} & 40 & 12 & $1.43\times$ & $1.05\times$ \\
  alu-v1\_28 & \benchcode{GQ} & 40 & 4 & $1.11\times$ & $1.05\times$ \\
  decod24-v2\_43 & \benchcode{G} & 61 & 5 & $1.09\times$ & $1.05\times$ \\
  tof\_3 & \benchcode{FGQ} & 45 & 10 & $1.29\times$ & $1.05\times$ \\
  toffoli\_n3 & \benchcode{B} & 24 & 1 & $1.04\times$ & $\boldsymbol{1.04\times}$ \\
  rd53\_138 & \benchcode{G} & 132 & 23 & $1.21\times$ & $1.04\times$ \\
  decod24-v1\_41 & \benchcode{G} & 91 & 14 & $1.18\times$ & $1.03\times$ \\
  barenco\_tof\_4\_\_o3 & \benchcode{F} & 68 & 6 & $1.10\times$ & $1.03\times$ \\
  cat\_state\_n4 & \benchcode{B} & 4 & 0 & $1.00\times$ & $\boldsymbol{1.00\times}$ \\
  ex-1\_166 & \benchcode{GQ} & 22 & 0 & $1.00\times$ & $\boldsymbol{1.00\times}$ \\
  ex1\_226 & \benchcode{G} & 13 & 0 & $1.00\times$ & $\boldsymbol{1.00\times}$ \\
  graycode6\_47 & \benchcode{G} & 5 & 0 & $1.00\times$ & $\boldsymbol{1.00\times}$ \\
  qrng\_n4 & \benchcode{B} & 4 & 0 & $1.00\times$ & $\boldsymbol{1.00\times}$ \\
  teleportation\_n3 & \benchcode{B} & 8 & 0 & $1.00\times$ & $\boldsymbol{1.00\times}$ \\
  tof\_3\_\_o3 & \benchcode{F} & 35 & 0 & $1.00\times$ & $\boldsymbol{1.00\times}$ \\
  tof\_4\_\_o3 & \benchcode{F} & 55 & 0 & $1.00\times$ & $\boldsymbol{1.00\times}$ \\
\bottomrule
\end{tabular}
\end{table}

Tables~\ref{tab:q8t64}, \ref{tab:grids}, \ref{tab:backbones}, and~\ref{tab:bench} report the measurements, with the per-optimizer and per-length breakdowns in Appendix~\ref{app:tables}.

\subsection{Optimization quality}
\label{sec:quality}

BOPS returns shorter circuits than every optimizer we compare against, and its margin over the strongest of them is larger than the margin between that optimizer and the rest of the field (Table~\ref{tab:q8t64}).
One structural difference is the scale of the transformations available to each method. 
QUESO enumerates synthesized rules only up to three qubits and four gates~\cite{queso}, while GUOQ limits both rewriting and resynthesis to three qubits~\cite{guoq}. 
BOPS instead learns from source--target pairs separated by rewrites that compose and nest within one another (\S\ref{sec:reverse}).
Our model matches the target's gate count on $77.2\%$ of sources (target reached) against QUESO's $26.1\%$ and finds an optimization shorter than the target for $11.7\%$ of sources, second to QUESO's $14.6\%$ and far above the $0.0$--$4.5\%$ of the other eight.
The circuits BOPS returns remove on average $95.2\%$ of the gates their corpus target proves can be removed (gap closed). 
Recovering that much of a reduction on circuits BOPS has never seen shows our model learns nearly everything its training data has to teach.

The result depends not only on the training data and the bridge formulation but also on the denoiser that has to learn the transport.
Table~\ref{tab:backbones} isolates that choice by training three denoisers on the same dataset, with the same recipe and the same step count.
Previous diffusion work on circuit synthesis denoises with a U-Net~\cite{furrutter2024diffusion}, and at a comparable parameter count a standard diffusion U-Net~\cite{ronneberger2015unet} improves only $40.6\%$ of our evaluation circuits.
The published U-DiT, the state of the art in U-shaped diffusion transformers~\cite{tian2024udit}, improves $76.2\%$.
Our denoiser, a U-DiT adapted to circuits (\S\ref{sec:denoisers}), improves $88.9\%$, more than double the U-Net and $12.7$ points above U-DiT at $1.6\times$ its step rate.

\subsection{Generalization}
\label{sec:scaling}

Benchmarks have structure not seen during training, as opposed to the test split evaluated in \S\ref{sec:quality}, which makes them a good measure for generalization.
Of the $68$ benchmark circuits that fit the $8\times64$ grid, BOPS shortens $46$, matching the best of the nine baselines on depth at $1.08\times$ and sitting inside their $1.08$--$1.26\times$ band on gates at $1.13\times$ (Tables~\ref{tab:bench} and~\ref{tab:app-bench-per-opt}).
At $8\times512$ it shortens $87$ of the $119$ that fit, at $1.09\times$ on gates and $1.04\times$ on depth against bands of $1.04$--$1.22\times$ and $1.00$--$1.08\times$ (Tables~\ref{tab:app-bench-per-opt} and~\ref{tab:bench512}).
That BOPS matches existing optimizers in depth reduction on $8\times64$ circuits and otherwise performs within their range demonstrates generalization beyond its training distribution, while suggesting that state-of-the-art performance may require training on source circuits whose structure more closely resembles the benchmarks.

A useful optimizer should also adapt efficiently beyond its original training grid.
A $50{,}000$-step finetune of the $8\times64$ model costs $7$ GPU-hours, compared with $45$ for the original training, and achieves $89.3\%$ whole-circuit improvement at $16\times64$ (Table~\ref{tab:grids}).
A further finetune to $16\times192$ achieves $72.0\%$.
Whole-circuit optimization becomes ineffective on very long circuits.
For the $8\times512$ model, the improvement rate falls from $74.1\%$ on circuits of at most $32$ timesteps to $1.1\%$ at $257$--$512$ timesteps (Table~\ref{tab:bands}).
To handle circuits longer than its attention span, windowed mode divides them into $32$-timestep subcircuits, optimizes each independently, retains the original subcircuit whenever no improvement is found, merges the results, and applies one final optimization pass.
This raises the improvement rate to $98.4\%$ at $16\times192$.
However, it limits reduction because any window that the model cannot shorten is retained unchanged.
On circuits of $257$--$512$ timesteps, this limits BOPS to a $1.23\times$ reduction, while PyZX-fr reaches $18.0\times$ on the same sources (Table~\ref{tab:app-bands-red}).
Overall, BOPS can be finetuned for circuits with $2\times$ as many qubits and $3\times$ the depth at a fraction of the original training cost while maintaining strong reduction, although circuits exceeding its attention span remain its principal limitation.

\subsection{Cost}
\label{sec:cost}

BOPS trades higher inference cost for greater optimization reach.
A whole-circuit pass evaluates $64$ candidates over $128$ denoising steps and takes $4.8$\,s at $8\times64$ and $7.7$\,s at $8\times512$, while windowed mode increases these times to $14.4$\,s and $40.3$\,s.
The pattern-based optimizers take only $0.03$--$0.10$\,s, while QUESO and GUOQ average $5.09$\,s and $4.17$\,s under a $5$\,s budget (Table~\ref{tab:q8t64}).
The additional latency therefore buys access to optimizations that the cheaper methods do not find.

Unlike most baselines, BOPS can use a larger inference budget to find better circuits.
Its candidates are independent, so additional compute can evaluate more candidates or make additional passes over windows without retraining the model.
Table~\ref{tab:bench} therefore gives BOPS $10\times$ the budget used in Table~\ref{tab:q8t64}.
The deterministic baselines return the same circuit regardless of their budget, while QUESO and GUOQ use additional time to extend a sequential search.
GUOQ applies its expensive resynthesis step in only $1.5\%$ of iterations~\cite{guoq}.
A larger budget thus gives BOPS more independent opportunities to find an improvement, while other methods either cannot use the additional time or devote it to a single sequential search.

BOPS also has a predictable cost that can be amortized over many circuits.
Each whole-circuit pass uses $8{,}192$ denoiser evaluations regardless of circuit occupancy, although each evaluation becomes more expensive as the circuit grows.
Training is a one-off cost of $45$ GPU-hours at $8\times64$, followed by $7$ GPU-hours for the $16$-qubit finetune.
At $8\times64$, only $3.5\%$ of sources return without a verified improvement, while above ten qubits the checker limits the improvement rates that can be established (\S\ref{sec:verification}).
Notably, in Table~\ref{tab:grids}, the $8\times512$ windowed improvement rate is approximately that of the $8\times64$ whole-circuit mode: dividing denoiser evaluation into independent GPU-parallelized windows of $32$ circuit layers compensates for the $\sim\!10\times$ depth increase.
At scale, BOPS therefore pays a predictable inference cost without trading away correctness.

\section{Conclusion}
\label{sec:conclusion}

BOPS reformulates circuit optimization as learned transport between equivalent circuits.
From source--target pairs, it learns to generate optimized circuits without exposing the model to rewrite rules, gate algebra, or an equivalence loss.
The current model is trained on Clifford+$T$, but the formulation can be retargeted to another finite gate set or device by changing the token alphabet, constructing device-specific pairs, and retraining.
QUESO pursues related retargetability by synthesizing a rewrite-based optimizer for each device~\cite{queso}, whereas BOPS places this specialization in the data and reuses one learning architecture across targets.

Source-circuit conditioning makes generative optimization compatible with the interface and correctness requirements of a compiler.
Previous diffusion models synthesize circuits from encoded target unitaries~\cite{furrutter2024diffusion,furrutter2025multimodal}.
BOPS instead conditions on the valid source circuit already in the compiler, which compactly specifies the unitary and supplies an implementation to improve (\S\ref{sec:towards}).
Training requires neither unitary construction nor equivalence checking. At inference, every accepted candidate passes unitary equivalence verification; if none does, BOPS returns the source.
This source-conditioned transport turns diffusion synthesis into a compiler pass with a compact input and a verified output.

BOPS outperforms all evaluated hard-coded and search-based baselines on the rule-generated corpus and remains within their range on benchmark programs whose structure is absent from its training data.
To our knowledge, BOPS is the first generative optimizer to match or outperform state-of-the-art compiler pipelines for this task while verifying every reported output.

\paragraph{Future work.}
Future work should extend BOPS to new circuit families, larger instances, and additional compilation tasks while making better use of inference compute.
Windowed BOPS closes $95.2\%$ of the gap to its corpus targets on the $8\times64$ test split, leaving little of the target reduction to recover; new pairs should therefore include arithmetic, chemistry, and other structured circuits, with splits that keep every benchmark outside training.
The $16\times192$ results motivate direct training at greater width and depth, together with attention that spans the full circuit and equivalence verification that returns fewer undecided results.
Topology-aware routing is a direct circuit-to-circuit extension; stabilizer-tableau synthesis and fault-tolerant code search are more speculative and require new representations, objectives, and verifiers.
Recent inference-time scaling could replace independent samples with verifier-guided tree search, prioritizing denoising trajectories by expected verified gate reduction while retaining exact equivalence~\cite{guo2025treeg,zhang2026search}.
Together, these directions would extend BOPS into a verified, data-driven compiler across circuit families, hardware targets, and compilation stages.

\begin{acks}
    The authors acknowledge the use of resources provided by the Isambard-AI National AI Research Resource (AIRR). 
    Isambard-AI is operated by the University of Bristol and is funded by the UK Government’s Department for Science, Innovation and Technology (DSIT) via UK Research and Innovation; and the Science and Technology Facilities Council [ST/AIRR/I-A-I/1023].
    L.S.H. is supported by a Werner Siemens Fellowship of the Werner Siemens Foundation awarded by the Swiss Study Foundation.
    L.Y is supported through the Tencent Post-Doctoral Research Fellowship.
\end{acks}

\bibliographystyle{ACM-Reference-Format}
\bibliography{references}

\clearpage
\appendix

\section{Reproducibility}\label{app:reprod}

\subsection{Usage of the BOPS library}

BOPS ships as the \texttt{qcopt} package, with subcommands \texttt{train},
\texttt{evaluate}, \texttt{bench} and \texttt{corpus}.
Training and evaluation need only PyTorch and NumPy; the baseline optimizers are
an optional extra and each is loaded only when named.
A checkpoint is scored over a corpus split with

\begin{quote}\footnotesize
\begin{verbatim}
qcopt evaluate CKPT --data ROOT --part test
  --out DIR --split default --seed 0
  --num-candidates 64 --nfe 128 --rounds 1
  --transforms auto --buckets auto
  --precision bf16 --device cuda --tf32
\end{verbatim}
\end{quote}

\noindent
which draws $64$ candidates per source, spends $128$ denoising steps on each, and
scores the shortest one verified equivalent.
\texttt{--rounds 1} is a single shot: above one, the shortest verified candidate
becomes the next round's source.
\texttt{--transforms auto} restates a source before sampling --- identity,
conjugate, reverse, wire permutation, commutation slide --- taking turns across
the candidates; each is transformed back and judged against the untransformed
source, so a transform can spend a candidate but never invent a success.
Windowed runs add \texttt{--window-widths}, one width per peel round, each a
length the checkpoint trained on: the first round cuts the source into blocks and
merges the optimized blocks, and each later round re-cuts the previous merge.
A block that yields nothing keeps its own gates, so a merge is equivalent by
construction.
The $16$-qubit runs add \texttt{--verifier tiered --verify-workers 32
--verify-timeout 120}, selecting the check of \S\ref{sec:verification} and the
seconds one pair may take before it is called undecided; the flag is ignored at
$Q\le10$, where dense is complete and faster.
Vendored OpenQASM~2 circuits are scored by \texttt{qcopt bench} against the model
and the optimizer registry in one pass.
The accompanying code release documents the remaining flags, the corpus
generation pipeline, and the per-source and per-block records each run writes.
Training and inference use eight NVIDIA GH200 GPUs across two nodes; every
baseline runs on one core, one circuit at a time, so a per-circuit baseline cost
is a single-core cost.

\subsection{Experiment settings for external libraries}\label{app:extlib}

Every backend is reached through one conversion layer.
A gate list becomes a Qiskit \texttt{QuantumCircuit}, the backend rewrites it, and \texttt{transpile} rebases the result onto the gate pool of~\eqref{eq:gateset}, with the external
tools exchanging OpenQASM~2 at both ends.
A result is kept only when it decodes to the pool, fits the $Q\times D$ grid, and matches the source operator under the check of \S\ref{sec:verification}.
Below, \texttt{pool} is that gate set (Clifford+T), \texttt{s} the run seed, fixed at $0$,
\texttt{c} the source as a PyZX circuit and \texttt{g} its \texttt{to\_graph()}.

\paragraph{Qiskit 2.5.1.}
We run the transpiler at optimization level three, its strongest preset, with the
target basis fixed to our gate pool.
\begin{quote}\footnotesize
\begin{verbatim}
transpile(qc, basis_gates=pool,
          optimization_level=3, seed_transpiler=s)
\end{verbatim}
\end{quote}

\paragraph{PyZX 0.10.5.}
PyZX rewrites a circuit as a ZX diagram, and we run three configurations of it.
PyZX-b applies the peephole pass alone and never builds the diagram, PyZX-fr
reduces the diagram and extracts a circuit back out of it, and PyZX-fo ends that
same pipeline with \texttt{full\_optimize}, which adds the TODD phase-polynomial
pass~\cite{Heyfron2019todd}.
\begin{quote}\footnotesize
\begin{verbatim}
PyZX-b   zx.basic_optimization(c.to_basic_gates())
PyZX-fr  zx.full_reduce(g); g.normalize();
         zx.extract_circuit(g); then PyZX-b
PyZX-fo  PyZX-fr, with zx.full_optimize
         in place of zx.basic_optimization
\end{verbatim}
\end{quote}

\begingroup
\emergencystretch=1em
\paragraph{tket 2.18.1.}
We run the one pass that cancels and merges neighboring gates, because tket's
stronger passes answer in a continuous gate set and are not comparable here.
\texttt{RemoveRedundancies().apply(c)}.

\paragraph{VOQC (pyvoqc 0.1.1).}
VOQC applies a fixed list of Clifford+$T$ passes that propagate $X$ gates, reduce
Hadamards and cancel neighboring gates.
Three of them run twice, because canceling one kind of gate exposes the other.
\par
\endgroup
\begin{quote}\footnotesize
\begin{verbatim}
not_propagation, hadamard_reduction,
cancel_two_qubit_gates, cancel_single_qubit_gates,
cancel_two_qubit_gates, hadamard_reduction,
cancel_single_qubit_gates, replace_rzq
\end{verbatim}
\end{quote}

\paragraph{T$|$zap$\rangle$ 0.1.0, a pre-release build of July 2026.}
T$|$zap$\rangle$ is a standalone binary that optimizes $T$-count rather than
total gate count.
\texttt{tzap input.qasm -o output.qasm}.

\paragraph{QUESO and GUOQ (GUOQ-1.0).}
QUESO searches over rewrite rules, and GUOQ is that same search with numerical
resynthesis added.
One jar implements both, and \texttt{-resynth} is the only difference.
\begin{quote}\footnotesize
\begin{verbatim}
java -ea -cp JAR qoptimizer.Optimizer
  --rules-dir RULES -g CLIFFORDT -opt TOTAL
  -resynth NONE|BQSKIT --seed s
  -out . -job qcopt input.qasm
\end{verbatim}
\end{quote}

\noindent
GUOQ reaches BQSKit over a socket on port $8080$.
Each is given a $5$~s search budget per circuit and terminated at expiry, its
best solution so far read.
We lift the jars' default $100$-gate cap so both run on every circuit, and a
circuit a tool does not shorten is reported at its source length.

\section{Circuit training curriculum}\label{app:train}

\subsection{Motif library}
\label{app:motifs}
The generator of \S\ref{sec:reverse} draws targets from uniformly random gates,
from the motifs below, and from repeated copies of one motif.
Every motif is exactly representable in the Clifford$+T$ gate set
of~\eqref{eq:gateset}, which is the criterion by which they were selected.

\subsection{Rewrite-rule catalog}
\label{app:rules}
The $86$ rules used for reduction and for reverse rewriting (\S\ref{sec:reverse})
are listed below, grouped by the number of qubits they act on and by whether they
are built in or SAT-mined.
Each rule is drawn as its longer side, an equality, and its shorter side;
reduction applies a rule left to right and a reverse rewrite applies it right to
left.

\section{Asymptotic scaling}\label{app:asymp}

\subsection{Asymptotic scaling of problem size}\label{app:prob}
In this section, we derive two lemmas characterizing how the number of circuits representable by our model scales with the qubit count $n$, gate count $G$, and circuit depth $D$.
In this section, we refer to circuits by their syntactic representation in Figure~\ref{fig:representation}: an ordered sequence of gates together with their qubit locations, where each layer can have at most one gate per qubit.
Consequently, encodings that differ syntactically are counted separately, even if they implement the same unitary. This characterizes the size of the representation and search space of the model.

First, we consider the simpler case of $n$-qubit circuits consisting of an ordered sequence of $G$ gates, representing typical quantum programs where gates are appended one at a time.
\begin{lemma}\label{lem:countfixG}
    The number of $n$-qubit Clifford+T circuit representations of gate count $G$ is $2^{\Theta(G \log n)}$.
\end{lemma}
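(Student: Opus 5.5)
The plan is to count the gate sequences exactly, using the syntactic model fixed above: a circuit is an ordered list $(g_1,\ldots,g_G)$, and each $g_i$ is either one of the five one-qubit gates of~\eqref{eq:gateset} placed on one of $n$ qubits, or a $\mathrm{CX}$ placed on an ordered pair of distinct qubits (control, target). The number of choices for a single gate is therefore
\begin{equation*}
  N(n) = 5n + n(n-1),
\end{equation*}
and, because the gates are chosen independently in sequence, the number of representations is exactly $N(n)^G$. The lemma then reduces to showing $\log_2 N(n) = \Theta(\log n)$ and multiplying by $G$.

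For the upper bound I would use $N(n) \le 5n + n^2 \le 6n^2$ for $n \ge 1$. This gives $N(n)^G \le 2^{G(2\log_2 n + \log_2 6)}$, which is $2^{O(G\log n)}$ once $n \ge 2$.

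For the lower bound I would use $N(n) \ge n^2$ for $n \ge 2$, since $5n + n(n-1) = n^2 + 4n$. It already suffices to count only the $\mathrm{CX}$ placements, or only the $5n \ge n$ single-qubit placements. This yields $N(n)^G \ge 2^{2G\log_2 n}$, which is $2^{\Omega(G\log n)}$.

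The only point needing care is the degenerate regime, and I do not expect a genuine obstacle beyond it. At $n=1$ we have $\log n = 0$, yet there are $5^G$ circuits, so $\Theta(G\log n)$ must be read with $n \ge 2$, or equivalently with $\log n$ replaced by $\log(n+1)$. I would state this convention explicitly. Since the model treats circuits as sequences rather than depth-packed layers, no ordering collapse or commutation quotient enters here; those issues belong only to the fixed-depth count of Lemma~\ref{lem:countfixD}. The grid encoding of Fig.~\ref{fig:representation} could map distinct sequences to the same layered tensor, but the statement counts sequences, so the bound stands as computed.
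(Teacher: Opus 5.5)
Your proposal is correct and takes the same approach as the paper. Both count the $n(n-1)+qn$ placements of a single gate (here $q=5$), raise that count to the $G$-th power, and bound it between $n^2$ and a constant multiple of $n^2$. Your explicit restriction to $n\ge 2$ is a small improvement over the paper, whose upper bound $2^{G(2\log_2 n+\log_2 q)}$ is not $2^{O(G\log n)}$ at $n=1$.
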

\begin{proof}
    The following holds for any finite gate set consisting of the CX gate and $q \geq 1$ one-qubit gates, such as the Clifford+T gate set.
    Taking into account positioning, there are $\mathcal{G} = n(n-1) + qn = n^2 + (q-1)n$ different gates on $n$ qubits.
    A circuit of gate count $G$ is an ordered sequence of $G$ such gates, so the number of distinct circuits is $\mathcal{G}^G$.
    As $n^2 \leq n^2 + (q-1)n \leq q n^2$ for all $n$, $\mathcal{G}^G$ is lower bounded by $2^{G (2 \log_2 n)}$ and upper bounded by $2^{G (2 \log_2 n + \log_2 q)}$.
    Because $q$ is constant, $2G \log_2 n + G \log_2 q = \Theta(G \log n)$, and hence $\mathcal{G}^G = 2^{\Theta(G \log n)}$.
\end{proof}

We next count $n$-qubit circuits of depth at most $D$ to determine how the number of circuits scales compared to the circuit spacetime cost $nD$.
\begin{lemma}\label{lem:countfixD}
    The number of $n$-qubit Clifford+T circuit representations of depth at most $D$ is $2^{\Theta(D n \log n)}$.
\end{lemma}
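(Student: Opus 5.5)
The plan is to count layers: bound the number of distinct single layers on $n$ qubits above and below, then raise to the power $D$ (or sum over depths $1,\dots,D$). In the representation of Figure~\ref{fig:representation}, a circuit of depth at most $D$ is a sequence of at most $D$ layers. Each layer assigns to every qubit either nothing, one of the $q$ one-qubit gates, or membership in a CX as control or target, subject to at most one gate per qubit. Let $L_n$ be the number of possible layers. The number of circuits of depth at most $D$ lies between $L_n^D$ (exactly depth-$D$ grids, or all $Q\times D$ grids with padding) and $\sum_{d\le D} L_n^d \le 2L_n^D$ for $L_n\ge 2$. It therefore suffices to show $\log_2 L_n = \Theta(n\log n)$.

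\emph{Upper bound.} Each qubit's cell takes one of $q+1$ gate/empty values, or is a CX endpoint paired with one of the other $n-1$ qubits, with a control/target flag. So each cell has at most $q+1+2(n-1)\le (q+2)n$ options. This gives $L_n \le ((q+2)n)^n = 2^{n\log_2 n + O(n)}$, and hence at most $2\cdot 2^{D(n\log_2 n+O(n))} = 2^{O(Dn\log n)}$ circuits.

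\emph{Lower bound.} This is the step needing a concrete construction, since single-qubit choices alone give only $(q+1)^n = 2^{O(n)}$ and lose the $\log n$ factor. The $\log n$ must come from CX pairings. I would count layers consisting of $\lfloor n/2\rfloor$ disjoint CX gates, i.e.\ perfect (or near-perfect) matchings of the $n$ qubits with an orientation on each edge. For even $n$ the number of perfect matchings is $(n-1)!! = \frac{n!}{2^{n/2}(n/2)!}$. By Stirling, $\log_2 (n-1)!! = \tfrac{n}{2}\log_2 n - O(n)$, so $\log_2 L_n \ge \tfrac12 n\log_2 n - O(n)$. For odd $n$, leave one qubit idle. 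Distinct sequences of $D$ such layers are distinct representations, since encodings are counted syntactically, so there are at least $L_n^D = 2^{\Omega(Dn\log n)}$ circuits.

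Combining the bounds gives $2^{\Theta(Dn\log n)}$, treating $q$ as constant as in Lemma~\ref{lem:countfixG}. A minor point to handle is small $n$: for $n=1$ the $\log n$ factor vanishes, so the asymptotic statement should be read for $n\ge 2$. There, $\log_2 n\ge 1$ absorbs the $O(n)$ terms in the exponent.
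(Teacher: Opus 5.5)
Your proposal is correct and follows essentially the same route as the paper. Both arguments count single layers, bound that count above by $2^{O(n\log n)}$, bound it below using layers of $\lfloor n/2\rfloor$ disjoint oriented CX gates together with Stirling's formula, and raise to the power $D$. The differences are minor: your per-qubit injection $L_n\le((q+2)n)^n$ is cruder than the paper's bound $(2(q+1))^n n^{n/2}$ on the exact matching sum but suffices, and you are more careful than the paper about summing over depths $d\le D$ and about the degenerate case $n=1$.
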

\begin{proof}
The following holds for any finite gate set consisting of the CX gate and $q$ one-qubit gates, such as the Clifford+T gate set.
All depth-one $n$-qubit circuits are accounted for by matching $2k$ qubits into $k$ disjoint pairs for all $k \leq \frac{n}{2}$, assigning to each pair one of the two CX orientations, and assigning to each of the remaining $n-2k$ qubits either one of the $q$ one-qubit gates or identity.
The total number of such depth-one circuits is:
\begin{align*}
     C_1 &= \sum_{k=0}^{\lfloor n/2 \rfloor} \binom{n}{2k} (2k-1)!! \; 2^k \; (q+1)^{n-2k}\\
     &= \sum_{k=0}^{\lfloor n/2 \rfloor} \binom{n}{2k} \frac{(2k)!}{2^k k!} \; 2^k \; (q+1)^{n-2k}
\end{align*}

To upper bound $C_1$, apply $\binom{n}{2k}\frac{(2k)!}{2^k k!}= \binom{n}{k} \frac{\prod_{j=0}^{k-1}(n-k-j)}{2^k}$:
\begin{align*}
C_1 &\leq (q+1)^n \sum_{k=0}^{\lfloor n/2 \rfloor} \binom{n}{k} \left(\frac{n}{2}\right)^k 2^k \\
&\leq (q+1)^n \sum_{k=0}^{\lfloor n/2 \rfloor} \binom{n}{k} n^k \\
&\leq (q+1)^n n^\frac{n}{2} \sum_{k=0}^{\lfloor n/2 \rfloor} \binom{n}{k} \\
&\leq (q+1)^n n^\frac{n}{2} 2^n \text{ \emph{(binomial theorem)}} = 2^{O(n \log n)}
\end{align*}

To lower bound $C_1$:
\begin{align*}
C_1 &\geq \sum_{k=\lfloor n/2 \rfloor}^{\lfloor n/2 \rfloor} \frac{n!}{(n - 2k)! \; 2^k k!} \; 2^k \\
&\geq \frac{n!}{2^{\lfloor n/2 \rfloor} \lfloor n/2 \rfloor!} \; 2^{\lfloor n/2 \rfloor} \text{ \emph{(since $(n-2\lfloor n/2 \rfloor)! = 1$)}} \\
&\geq \frac{\sqrt{2\pi n} \left(\frac{n}{e}\right)^n}{e \sqrt{\lfloor n/2 \rfloor} \left(\frac{\lfloor n/2 \rfloor}{e}\right)^{\lfloor n/2 \rfloor}} \text{ \emph{(Stirling bounds)}} \\
&\geq \frac{\sqrt{2\pi n} \left(\frac{n}{e}\right)^n}{e \sqrt{n/2} \left(\frac{n}{2e}\right)^{n/2}} = \frac{2\sqrt{\pi}}{e} \left(\frac{2n}{e}\right)^{n/2} = 2^{\Omega(n \log n)}
\end{align*}
where Stirling bounds used are $x! \ge \sqrt{2\pi x}(x/e)^x$ for $x = n$ and $x! \le e \sqrt{x} (x/e)^x$ for $x = \lfloor n/2 \rfloor$, and the last line uses $\lfloor n/2 \rfloor \leq \frac{n}{2}$.

Thus, there are $(C_1)^D = 2^{\Theta(D n \log n)}$ $n$-qubit Clifford+T circuits of depth at most $D$ having distinct circuit representations.
\end{proof}

\subsection{Asymptotic scaling of equivalence checking}\label{app:equiv}

In \S\ref{sec:verification}, we discuss two tests for equivalence of Clifford+T circuits, which answers positively only for equivalent circuits (i.e. no false positives), but has a small probability of not definitively deciding whether the two circuits are equivalent.
In this section, we summarize the approach of strong classical simulation of quantum computation, as a third test for equivalence which answers definitively, and recount how it scales with circuit size.
Strong simulation computes exact amplitudes of measurement outcomes, in contrast to weak simulation which samples measurement outcomes from the correct probability distribution.
Strong simulation implies weak simulation.

Equivalence checking follows directly from strong simulation (but not weak simulation) and is even more efficient than computing the full unitary matrix.
This is because to check whether $W = U(C)^\dagger U(C')$ is the identity matrix up to a global phase, it suffices to compute just the diagonal $\langle x|W|x\rangle$ which should all equal the same norm-1 complex number: One entry different from the rest refutes the equivalence.

In constrast to statevector simulation methods, state-of-the-art classical simulation approaches through stabilizer decompositions instead split the simulation task into that of simulating exponentially many Clifford terms, which are then summed over.
Each Clifford term's simulation complexity is polynomial in the size of its labeled graph.
These terms are simultaneously simulable on parallel threads, and so are most runtime-efficient when the number of available threads is at least the number of terms.

\begin{lemma}\label{lem:circgraph}
    Any $n$-qubit depth-$D$ Clifford+T circuit, converted gate-by-gate into the ZX-calculus~\cite{ZX}, is a labeled graph consisting of $O(n D)$ vertices and $O(n D)$ edges.
\end{lemma}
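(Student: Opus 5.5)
The plan is a direct counting argument on the gate-by-gate translation. First, fix the standard ZX translation of each gate in $\mathcal{G}$:
\begin{itemize}
  \item each one-qubit gate $H,S,S^\dagger,T,T^\dagger$ becomes a single spider (or a single Hadamard-labelled vertex/edge) on its wire, with at most one incoming and one outgoing wire edge;
  \item $\mathrm{CX}$ becomes a green spider on the control and a red spider on the target, joined by one edge, each with one incoming and one outgoing wire edge.
\end{itemize}
Every gate therefore contributes at most a constant number $c_v$ of vertices, with $c_v=2$. It also contributes at most a constant number $c_e$ of edges, with $c_e\le 5$ counting the connecting edge and the wire segments it creates. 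These constants do not depend on $n$ or $D$.

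Second, bound the number of gates. In the representation of Fig.~\ref{fig:representation}, each layer holds at most one gate per qubit, exactly as used in the proof of Lemma~\ref{lem:countfixD}. A depth-$D$ circuit on $n$ qubits therefore has $G\le nD$ gates; each $\mathrm{CX}$ occupies two cells, so the bound is in fact tighter.

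Third, add the boundary. There are $n$ input vertices and $n$ output vertices, plus one wire edge per qubit closing off each wire. This gives
\[
|V|\le 2n + c_v G \le 2n + c_v nD = O(nD),
\]
and similarly
\[
|E|\le n + c_e G = O(nD).
\]
The edge count is cleanest if each wire is viewed as a path through the vertices placed on it. Wire $q$ carries at most $D$ gate vertices, so it consists of at most $D+1$ edges. Summing over qubits gives $n(D+1)$ wire edges, plus at most $nD/2$ CX connecting edges, which is $O(nD)$. If $D\ge1$, then $n=O(nD)$ and the boundary terms are absorbed.

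I do not expect a real obstacle here. The only care needed is to fix one translation convention, because Hadamards may be encoded either as vertices or as edge labels, and to check that either choice changes only the constants. I would also note that no simplification is applied, so the graph is exactly the gate-by-gate image and the linear bound is immediate.
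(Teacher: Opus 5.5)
Your proposal is correct and follows essentially the same counting argument as the paper: each gate adds a constant number of vertices and edges, each qubit--timestep cell holds at most one gate, and the edges are at most $n(D+1)$ wire segments plus at most $nD/2$ CX edges. The only difference is bookkeeping. The paper encodes $H$ as a Hadamard edge and does not count boundary vertices, which gives the sharper $|V|\le nD$, whereas your extra $2n$ boundary term is absorbed once $D\ge 1$.
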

\begin{proof}
     Clifford+T circuits as ZX-calculus diagrams have two types of edges (identity or Hadamard), and two types of vertices (Z or X). Each vertex has 8 possible labels, corresponding to the 8 possible integer multiples of $\frac{\pi}{4}$ modulo $2\pi$.
     Consider starting with $n$ identity edges and adding gate by gate: $H$ gates toggle an identity edge to a Hadamard edge, $CX$ gates add two vertices and three edges, and $\{S, S^\dagger, T, T^\dagger\}$ all add one vertex and one edge.
     The circuit therefore has at most $n D$ vertices (as each position coordinate adds at most one vertex) and at most $n (D+1) + \frac{n D}{2}$ edges (saturated by $\frac{n D}{2}$ CX gates, on $n (D+1)$ wire segments including boundary edges).
\end{proof}

The general approach of stabilizer decompositions exhibits the following asymptotic scaling, where in practice performance can scale with smaller effective $\alpha$ based on the efficiency of the decompositions applicable to the circuit in question.
\begin{lemma}\label{lem:clsimcost}
    An $n$-qubit, depth-$D$ Clifford+T circuit with $t$ $T$ gates is classically simulable at an asymptotic cost of $O(2^{\alpha t})$ Clifford terms for some constant $\alpha$, each describable by a ZX-calculus labeled graph consisting of $O(n D)$ vertices and $O(n D)$ edges.
\end{lemma}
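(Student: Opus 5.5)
The plan is to prove Lemma~\ref{lem:clsimcost} by combining Lemma~\ref{lem:circgraph} with the standard stabilizer-decomposition argument, and to track separately the \emph{number} of Clifford terms and the \emph{size} of each term. First, I would convert the circuit gate-by-gate into a ZX diagram. By Lemma~\ref{lem:circgraph}, this diagram has $O(nD)$ vertices and $O(nD)$ edges, and exactly $t$ of its spiders carry a non-Clifford phase, namely an odd multiple of $\frac{\pi}{4}$. Strong simulation of an amplitude $\langle x|U|y\rangle$, or of the diagonal entries $\langle x|W|x\rangle$ used for equivalence, amounts to plugging computational-basis effects into the boundary. This adds only $O(n)$ Clifford vertices and does not change the asymptotic size.

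Second, I would isolate the non-Clifford parts. Each $T$-like spider of phase $\frac{k\pi}{4}$, $k$ odd, is unfused into a Clifford spider connected to a single-legged $\frac{\pi}{4}$ spider. By the magic-state picture, this leg is the state $|T\rangle$ (up to Clifford corrections), and this step adds $O(t) \subseteq O(nD)$ vertices. I would then apply any fixed stabilizer decomposition of $|T\rangle^{\otimes m}$ into $\chi_m$ stabilizer states, with $\chi_m$ constant for constant $m$. Examples are the trivial $|T\rangle = \frac{1}{\sqrt2}(|0\rangle + e^{i\pi/4}|1\rangle)$ with $\chi_1 = 2$, or the Bravyi--Smith--Smolin and Kissinger--van de Wetering decompositions~\cite{Qassim2021alpha,kissinger2022stabdecomp}.

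Third, I would group the $t$ magic legs into $\lceil t/m\rceil$ blocks and substitute the decomposition blockwise. Multilinearity of the diagram, which is just linearity of tensor contraction, expands it into a sum of $\chi_m^{\lceil t/m\rceil} = 2^{\alpha t + O(1)}$ terms with $\alpha = \frac{\log_2 \chi_m}{m}$, which is $O(2^{\alpha t})$. In each term, every magic leg is replaced by a stabilizer-state diagram with $O(1)$ vertices per block. The term is therefore a Clifford ZX diagram with $O(nD) + O(t) = O(nD)$ vertices and edges, using $t \le nD$. Each such Clifford diagram has a scalar computable in time polynomial in its size, for instance by Clifford simplification to a scalar~\cite{kissinger2020pyzx} or through Gottesman--Knill. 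This gives the claimed form of the cost.

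The main obstacle is stating the constant $\alpha$ honestly rather than deriving a sharp value. The lemma only asserts \emph{some} constant, so the trivial decomposition with $\alpha = 1$ already suffices. The care needed is in checking two things. First, the per-block stabilizer diagrams really have bounded size, so they do not inflate the $O(nD)$ vertex and edge bound. Second, the global phases and scalars from unfusing and from normalising $|T\rangle$ are tracked exactly, so the summed amplitude is exact. I would then remark that better decompositions lower $\alpha$, e.g.\ to about $0.396$ for the BSS-type decomposition, and that interleaving simplification between decomposition steps lowers the effective $\alpha$ further in practice, as the text before the lemma indicates.
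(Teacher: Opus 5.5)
Your proposal is correct and follows the paper's own argument: decompose the $t$ non-Clifford spiders a constant number at a time into a constant number of constant-size Clifford pieces, so the term count is exponential in $t$ while each term keeps the $O(nD)$ size of Lemma~\ref{lem:circgraph}; you make explicit what the paper leaves implicit, namely the unfusion into magic-state legs, the blockwise exponent $\alpha=\log_2\chi_m/m$, and the use of $t\le nD$. One small slip in your closing remark: $\alpha\approx 0.396$ is the Qassim--Pashayan--Gosset bound~\cite{Qassim2021alpha}, whereas a Bravyi--Smith--Smolin-type decomposition ($\chi_6=7$) gives $\alpha=\log_2 7/6\approx 0.468$.
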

\begin{proof}
    Standard stabilizer decompositions decompose a constant number of non-Clifford vertices (i.e. any vertices labeled by any odd multiple of $\frac{\pi}{4}$) at a time, into a sum over a constant number of Clifford graphs, each of a constant size.
    Doing this to the initial labeled graph of $O(n D)$ vertices and edges per Lemma~\ref{lem:circgraph}, for all $t$ $T$ gates, results in a number of Clifford graphs exponential in $t$, where each Clifford graph has $O(n D)$ vertices and edges.
\end{proof}

The best-known upper bound for exact stabilizer decomposition is $\alpha \approx 0.396$~\cite{Qassim2021alpha} which is achievable for circuits of any size~\cite{kissinger2022stabdecomp}.
A state-of-the-art GPU-parallelized stabilizer decompositions implementation called ParamZX~\cite{sutcliffe2024paramzx} is accessible from PyZX. Although BOPS used PyZX without ParamZX in the reported results of this work for equivalence checking, BOPS supports using ParamZX which is empirically 3-4 orders of magnitude faster than PyZX.
These reports suggest that using ParamZX for definitive equivalence checking would be expected to comprise an inconsequential fraction of the total generative inference time.

\clearpage

\section{Full result tables}
\label{app:tables}

\begin{table}[htbp]
\caption{Each optimizer on its own rather than the per-circuit best, on the $68$ benchmark circuits of Table~\ref{tab:bench} and the $119$ of Table~\ref{tab:bench512}.
A depth below $1.00\times$ is a set made deeper while gates came out.}
\label{tab:app-bench-per-opt}
\footnotesize
\setlength{\tabcolsep}{4pt}
\centering
\begin{tabular}{@{}lrrrr@{}}
\toprule
& \multicolumn{2}{c}{$8\times64$} & \multicolumn{2}{c}{$8\times512$} \\
\cmidrule(lr){2-3}\cmidrule(lr){4-5}
system & gates & depth & gates & depth \\
\midrule
Qiskit~O3        & $1.09\times$ & $1.06\times$ & $1.10\times$ & $1.05\times$ \\
PyZX-b           & $1.10\times$ & $1.03\times$ & $1.12\times$ & $1.03\times$ \\
PyZX-fr          & $1.11\times$ & $1.04\times$ & $1.15\times$ & $1.00\times$ \\
PyZX-fo          & $1.13\times$ & $\boldsymbol{1.08\times}$ & $1.11\times$ & $1.06\times$ \\
tket             & $1.08\times$ & $1.06\times$ & $1.07\times$ & $1.04\times$ \\
VOQC             & $1.09\times$ & $1.07\times$ & $1.10\times$ & $1.06\times$ \\
T$|$zap$\rangle$ & $1.15\times$ & $\boldsymbol{1.08\times}$ & $1.19\times$ & $\boldsymbol{1.08\times}$ \\
QUESO            & $\boldsymbol{1.26\times}$ & $1.06\times$ & $\boldsymbol{1.22\times}$ & $1.00\times$ \\
GUOQ             & $1.17\times$ & $0.99\times$ & $1.04\times$ & $1.00\times$ \\
\midrule
BOPS             & $1.13\times$ & $\boldsymbol{1.08\times}$ & $1.09\times$ & $1.04\times$ \\
\bottomrule
\end{tabular}
\end{table}

\begin{table}[htbp]
\caption{The seven optimizers run on \texttt{q8t512}, by source length, gates reduced by, on the same bands as Table~\ref{tab:bands}.
QUESO and GUOQ were not run on this set.}
\label{tab:app-bands-red}
\footnotesize
\setlength{\tabcolsep}{4pt}
\centering
\begin{tabular}{@{}lrrrrrr@{}}
\toprule
& \multicolumn{5}{c}{source columns} & \\
\cmidrule(lr){2-6}
system & $\le32$ & $33$--$64$ & $65$--$128$ & $129$--$256$ & $257$--$512$ & all \\
\emph{circuits}  & 1{,}154 & 452 & 650 & 354 & 371 & 2{,}981 \\
\midrule
  Qiskit~O3 & $1.20\times$ & $1.25\times$ & $1.29\times$ & $1.30\times$ & $1.32\times$ & $1.25\times$ \\
  PyZX-b & $1.51\times$ & $1.99\times$ & $2.30\times$ & $2.33\times$ & $2.96\times$ & $1.97\times$ \\
  PyZX-fr & $1.56\times$ & $\boldsymbol{2.56\times}$ & $\boldsymbol{4.35\times}$ & $\boldsymbol{7.06\times}$ & $\boldsymbol{18.0\times}$ & $\boldsymbol{3.41\times}$ \\
  PyZX-fo & $1.49\times$ & $2.40\times$ & $4.02\times$ & $6.60\times$ & $17.5\times$ & $3.23\times$ \\
  tket & $1.09\times$ & $1.11\times$ & $1.14\times$ & $1.16\times$ & $1.16\times$ & $1.12\times$ \\
  VOQC & $1.23\times$ & $1.31\times$ & $1.36\times$ & $1.38\times$ & $1.43\times$ & $1.31\times$ \\
  T$|$zap$\rangle$ & $1.25\times$ & $1.32\times$ & $1.38\times$ & $1.41\times$ & $1.46\times$ & $1.33\times$ \\
\midrule
  BOPS, whole & $1.67\times$ & $1.66\times$ & $1.44\times$ & $1.07\times$ & $1.04\times$ & $1.45\times$ \\
  BOPS, windowed & $\boldsymbol{1.84\times}$ & $2.25\times$ & $1.92\times$ & $1.29\times$ & $1.23\times$ & $1.75\times$ \\
\bottomrule
\end{tabular}
\end{table}

\begin{table}[htbp]
\caption{\texttt{q8t512} by source length.
The seven optimizers run on this set are in Table~\ref{tab:app-bands-red}.}
\label{tab:bands}
\footnotesize
\setlength{\tabcolsep}{4pt}
\centering
\begin{tabular}{@{}lrrrrrr@{}}
\toprule
& \multicolumn{5}{c}{source columns} & \\
\cmidrule(lr){2-6}
& $\le32$ & $33$--$64$ & $65$--$128$ & $129$--$256$ & $257$--$512$ & all \\
\emph{circuits}  & 1{,}154 & 452 & 650 & 354 & 371 & 2{,}981 \\
\midrule
\multicolumn{7}{@{}l}{\emph{gates reduced by}} \\
  \quad whole & $1.67\times$ & $1.66\times$ & $1.44\times$ & $1.07\times$ & $1.04\times$ & $1.45\times$ \\
  \quad windowed & $\boldsymbol{1.84\times}$ & $\boldsymbol{2.25\times}$ & $\boldsymbol{1.92\times}$ & $\boldsymbol{1.29\times}$ & $\boldsymbol{1.23\times}$ & $\boldsymbol{1.75\times}$ \\
\midrule
\multicolumn{7}{@{}l}{\emph{improved}} \\
  \quad whole & 74.1\% & 41.6\% & 17.4\% & 2.5\% & 1.1\% & 39.2\% \\
  \quad windowed & \textbf{84.8\%} & \textbf{80.8\%} & \textbf{82.6\%} & \textbf{79.1\%} & \textbf{88.7\%} & \textbf{83.5\%} \\
\bottomrule
\end{tabular}
\end{table}

\begin{table}[htbp]
\caption{Target length in gates on the rule-generated part of each test split -- \emph{atomic}, \emph{few}, \emph{medium} and \emph{hard} pooled, the same circuits scored in Table~\ref{tab:grids}.}
\label{tab:app-target-test}
\footnotesize
\setlength{\tabcolsep}{4pt}
\centering
\begin{tabular}{@{}lrr@{}}
\toprule
dataset & mean & median \\
\midrule
\texttt{q8t64} ($8\times64$)     & 16.55 & 15 \\
\texttt{q16t64} ($16\times64$)   & 29.80 & 26 \\
\texttt{q16t192} ($16\times192$) & 33.72 & 27 \\
\texttt{q8t512} ($8\times512$)   & 21.10 & 16 \\
\bottomrule
\end{tabular}
\end{table}

\begin{table}[htbp]
\caption{As Table~\ref{tab:app-target-test}, over the training splits, counting each generated pair once rather than once per trajectory-derived row.}
\footnotesize
\setlength{\tabcolsep}{4pt}
\centering
\begin{tabular}{@{}lrr@{}}
\toprule
dataset & mean & median \\
\midrule
\texttt{q8t64} ($8\times64$)     & 16.25 & 15 \\
\texttt{q16t64} ($16\times64$)   & 29.81 & 26 \\
\texttt{q16t192} ($16\times192$) & 33.05 & 27 \\
\texttt{q8t512} ($8\times512$)   & 20.06 & 15 \\
\bottomrule
\end{tabular}
\end{table}

\begin{table*}[p]
\caption{Full gate-count and depth results for the benchmark circuits summarized in Table~\ref{tab:bench}. Reductions are bold where BOPS matches or beats all nine systems. Each \emph{best} is the best of the nine on that circuit alone, so its gate and depth columns need not name the same optimizer. \emph{src}: source gates; \emph{cut}: gates the best system removes. \emph{in} names the suites: \textsuperscript{F}Feynman, \textsuperscript{G}GUOQ, \textsuperscript{B}QASMBench, and \textsuperscript{Q}QUESO. Optimizer superscripts are \textsuperscript{qi}Qiskit~O3, \textsuperscript{pb}PyZX-b, \textsuperscript{pr}PyZX-fr, \textsuperscript{po}PyZX-fo, \textsuperscript{tk}tket, \textsuperscript{vq}VOQC, \textsuperscript{tz}T$|$zap$\rangle$, \textsuperscript{qu}QUESO, and \textsuperscript{gq}GUOQ. A system returning nothing strictly shorter is scored on its input.}
\label{tab:bench-full}
\fontsize{8.2pt}{9pt}\selectfont
\setlength{\tabcolsep}{3pt}
\centering
\begin{tabular}{@{}lcrrrrrr@{}}
\toprule
& & & & \multicolumn{2}{c}{gates} & \multicolumn{2}{c}{depth} \\
\cmidrule(lr){5-6}\cmidrule(lr){7-8}
circuit & in & src & cut & best & BOPS & best & BOPS \\
\midrule
  lpn\_n5 & \textsuperscript{B} & 11 & 8 & \textsuperscript{pb}\,$3.67\times$ & $\boldsymbol{3.67\times}$ & \textsuperscript{pb}\,$1.33\times$ & $\boldsymbol{1.33\times}$ \\
  bb84\_n8 & \textsuperscript{B} & 54 & 34 & \textsuperscript{qi}\,$2.70\times$ & $\boldsymbol{3.00\times}$ & \textsuperscript{qi}\,$1.57\times$ & $\boldsymbol{1.57\times}$ \\
  qec\_en\_n5 & \textsuperscript{B} & 25 & 15 & \textsuperscript{qu}\,$2.50\times$ & $2.27\times$ & \textsuperscript{qu}\,$2.12\times$ & $1.70\times$ \\
  hs4\_n4 & \textsuperscript{B} & 40 & 24 & \textsuperscript{qu}\,$2.50\times$ & $2.00\times$ & \textsuperscript{qu}\,$2.00\times$ & $1.56\times$ \\
  iswap\_n2 & \textsuperscript{B} & 12 & 4 & \textsuperscript{qu}\,$1.50\times$ & $\boldsymbol{1.50\times}$ & \textsuperscript{qu}\,$1.43\times$ & $\boldsymbol{1.43\times}$ \\
  rd32-v1\_68 & \textsuperscript{G} & 42 & 21 & \textsuperscript{qu}\,$2.00\times$ & $1.45\times$ & \textsuperscript{qu}\,$2.25\times$ & $1.42\times$ \\
  deutsch\_n2 & \textsuperscript{B} & 8 & 2 & \textsuperscript{qi}\,$1.33\times$ & $\boldsymbol{1.33\times}$ & \textsuperscript{qi}\,$1.40\times$ & $\boldsymbol{1.40\times}$ \\
  rd32-v0\_66 & \textsuperscript{G} & 34 & 15 & \textsuperscript{qu}\,$1.79\times$ & $1.31\times$ & \textsuperscript{pb}\,$1.54\times$ & $1.25\times$ \\
  mod5\_4 & \textsuperscript{FGQ} & 66 & 29 & \textsuperscript{po}\,$1.78\times$ & $1.25\times$ & \textsuperscript{po}\,$1.70\times$ & $1.08\times$ \\
  ham3\_102 & \textsuperscript{GQ} & 20 & 4 & \textsuperscript{qu}\,$1.25\times$ & $1.18\times$ & \textsuperscript{qi}\,$1.18\times$ & $\boldsymbol{1.18\times}$ \\
  barenco\_tof\_3 & \textsuperscript{FGQ} & 60 & 16 & \textsuperscript{qu}\,$1.36\times$ & $1.15\times$ & \textsuperscript{---}\,$1.00\times$ & $\boldsymbol{1.00\times}$ \\
  4mod5-v1\_22 & \textsuperscript{GQ} & 24 & 4 & \textsuperscript{qu}\,$1.20\times$ & $1.14\times$ & \textsuperscript{qi}\,$1.15\times$ & $\boldsymbol{1.25\times}$ \\
  tof\_4 & \textsuperscript{FGQ} & 75 & 15 & \textsuperscript{qu}\,$1.25\times$ & $1.12\times$ & \textsuperscript{tz}\,$1.06\times$ & $1.03\times$ \\
  qiskit-alu-v0\_26 & \textsuperscript{G} & 87 & 17 & \textsuperscript{qu}\,$1.24\times$ & $1.12\times$ & \textsuperscript{tz}\,$1.08\times$ & $1.04\times$ \\
  decod24-v0\_38 & \textsuperscript{G} & 54 & 10 & \textsuperscript{qu}\,$1.23\times$ & $1.10\times$ & \textsuperscript{qu}\,$1.14\times$ & $1.10\times$ \\
  alu-bdd\_288 & \textsuperscript{G} & 87 & 26 & \textsuperscript{qu}\,$1.43\times$ & $1.10\times$ & \textsuperscript{qu}\,$1.31\times$ & $1.04\times$ \\
  alu-v0\_26 & \textsuperscript{G} & 87 & 17 & \textsuperscript{qu}\,$1.24\times$ & $1.10\times$ & \textsuperscript{tz}\,$1.08\times$ & $1.04\times$ \\
  barenco\_tof\_4 & \textsuperscript{FGQ} & 114 & 28 & \textsuperscript{tz}\,$1.33\times$ & $1.10\times$ & \textsuperscript{tz}\,$1.05\times$ & $1.02\times$ \\
  4gt13\_91 & \textsuperscript{G} & 103 & 25 & \textsuperscript{qu}\,$1.32\times$ & $1.10\times$ & \textsuperscript{---}\,$1.00\times$ & $\boldsymbol{1.00\times}$ \\
  4gt13-v1\_93 & \textsuperscript{G} & 74 & 12 & \textsuperscript{tz}\,$1.19\times$ & $1.09\times$ & \textsuperscript{tz}\,$1.02\times$ & $1.00\times$ \\
  4mod5-v0\_18 & \textsuperscript{G} & 75 & 32 & \textsuperscript{qu}\,$1.74\times$ & $1.09\times$ & \textsuperscript{qu}\,$1.48\times$ & $1.00\times$ \\
  decod24-bdd\_294 & \textsuperscript{G} & 88 & 37 & \textsuperscript{po}\,$1.73\times$ & $1.09\times$ & \textsuperscript{po}\,$1.33\times$ & $1.00\times$ \\
  alu-v0\_27 & \textsuperscript{GQ} & 39 & 4 & \textsuperscript{qu}\,$1.11\times$ & $1.08\times$ & \textsuperscript{qi}\,$1.09\times$ & $\boldsymbol{1.14\times}$ \\
  4mod5-bdd\_287 & \textsuperscript{G} & 79 & 18 & \textsuperscript{gq}\,$1.30\times$ & $1.08\times$ & \textsuperscript{qu}\,$1.18\times$ & $1.00\times$ \\
  rd32\_270 & \textsuperscript{G} & 93 & 46 & \textsuperscript{po}\,$1.98\times$ & $1.08\times$ & \textsuperscript{po}\,$1.85\times$ & $1.00\times$ \\
  alu-v3\_35 & \textsuperscript{GQ} & 40 & 4 & \textsuperscript{qu}\,$1.11\times$ & $1.08\times$ & \textsuperscript{qi}\,$1.09\times$ & $\boldsymbol{1.14\times}$ \\
  alu-v4\_37 & \textsuperscript{GQ} & 40 & 4 & \textsuperscript{qu}\,$1.11\times$ & $1.08\times$ & \textsuperscript{qi}\,$1.09\times$ & $\boldsymbol{1.14\times}$ \\
  4gt11\_82 & \textsuperscript{Q} & 27 & 3 & \textsuperscript{qu}\,$1.12\times$ & $1.08\times$ & \textsuperscript{qi}\,$1.18\times$ & $\boldsymbol{1.18\times}$ \\
  mod5d2\_64 & \textsuperscript{G} & 56 & 11 & \textsuperscript{po}\,$1.24\times$ & $1.08\times$ & \textsuperscript{po}\,$1.35\times$ & $1.00\times$ \\
  alu-v1\_29 & \textsuperscript{GQ} & 43 & 2 & \textsuperscript{qi}\,$1.05\times$ & $\boldsymbol{1.07\times}$ & \textsuperscript{pb}\,$1.17\times$ & $1.12\times$ \\
  alu-v2\_33 & \textsuperscript{GQ} & 43 & 2 & \textsuperscript{qi}\,$1.05\times$ & $\boldsymbol{1.07\times}$ & \textsuperscript{qi}\,$1.08\times$ & $\boldsymbol{1.12\times}$ \\
  4mod5-v1\_23 & \textsuperscript{G} & 72 & 38 & \textsuperscript{po}\,$2.12\times$ & $1.07\times$ & \textsuperscript{po}\,$2.20\times$ & $1.07\times$ \\
  4gt5\_76 & \textsuperscript{G} & 91 & 19 & \textsuperscript{qu}\,$1.26\times$ & $1.07\times$ & \textsuperscript{qu}\,$1.04\times$ & $1.02\times$ \\
  simon\_n6 & \textsuperscript{B} & 62 & 48 & \textsuperscript{qu}\,$4.43\times$ & $1.07\times$ & \textsuperscript{qu}\,$5.00\times$ & $1.06\times$ \\
  4gt13\_92 & \textsuperscript{GQ} & 66 & 13 & \textsuperscript{gq}\,$1.25\times$ & $1.06\times$ & \textsuperscript{qu}\,$1.06\times$ & $1.00\times$ \\
  alu-v3\_34 & \textsuperscript{GQ} & 55 & 6 & \textsuperscript{qu}\,$1.12\times$ & $1.06\times$ & \textsuperscript{qi}\,$1.06\times$ & $\boldsymbol{1.10\times}$ \\
  mod5mils\_65 & \textsuperscript{GQ} & 38 & 9 & \textsuperscript{gq}\,$1.31\times$ & $1.06\times$ & \textsuperscript{pr}\,$1.09\times$ & $1.00\times$ \\
  4mod5-v0\_19 & \textsuperscript{GQ} & 38 & 5 & \textsuperscript{tz}\,$1.15\times$ & $1.06\times$ & \textsuperscript{tz}\,$1.09\times$ & $1.00\times$ \\
  mod5\_4\_\_o3 & \textsuperscript{F} & 40 & 12 & \textsuperscript{gq}\,$1.43\times$ & $1.05\times$ & \textsuperscript{qu}\,$1.74\times$ & $1.06\times$ \\
  alu-v1\_28 & \textsuperscript{GQ} & 40 & 4 & \textsuperscript{qu}\,$1.11\times$ & $1.05\times$ & \textsuperscript{qi}\,$1.09\times$ & $\boldsymbol{1.09\times}$ \\
  decod24-v2\_43 & \textsuperscript{G} & 61 & 5 & \textsuperscript{tz}\,$1.09\times$ & $1.05\times$ & \textsuperscript{qi}\,$1.06\times$ & $\boldsymbol{1.09\times}$ \\
  tof\_3 & \textsuperscript{FGQ} & 45 & 10 & \textsuperscript{qu}\,$1.29\times$ & $1.05\times$ & \textsuperscript{---}\,$1.00\times$ & $\boldsymbol{1.00\times}$ \\
  toffoli\_n3 & \textsuperscript{B} & 24 & 1 & \textsuperscript{vq}\,$1.04\times$ & $\boldsymbol{1.04\times}$ & \textsuperscript{---}\,$1.00\times$ & $\boldsymbol{1.00\times}$ \\
  rd53\_138 & \textsuperscript{G} & 132 & 23 & \textsuperscript{gq}\,$1.21\times$ & $1.04\times$ & \textsuperscript{tz}\,$1.02\times$ & $1.00\times$ \\
  decod24-v1\_41 & \textsuperscript{G} & 91 & 14 & \textsuperscript{qu}\,$1.18\times$ & $1.03\times$ & \textsuperscript{tz}\,$1.02\times$ & $1.00\times$ \\
  barenco\_tof\_4\_\_o3 & \textsuperscript{F} & 68 & 6 & \textsuperscript{qu}\,$1.10\times$ & $1.03\times$ & \textsuperscript{qu}\,$1.05\times$ & $1.02\times$ \\
  cat\_state\_n4 & \textsuperscript{B} & 4 & 0 & \textsuperscript{---}\,$1.00\times$ & $\boldsymbol{1.00\times}$ & \textsuperscript{---}\,$1.00\times$ & $\boldsymbol{1.00\times}$ \\
  ex-1\_166 & \textsuperscript{GQ} & 22 & 0 & \textsuperscript{---}\,$1.00\times$ & $\boldsymbol{1.00\times}$ & \textsuperscript{---}\,$1.00\times$ & $\boldsymbol{1.00\times}$ \\
  ex1\_226 & \textsuperscript{G} & 13 & 0 & \textsuperscript{---}\,$1.00\times$ & $\boldsymbol{1.00\times}$ & \textsuperscript{---}\,$1.00\times$ & $\boldsymbol{1.00\times}$ \\
  graycode6\_47 & \textsuperscript{G} & 5 & 0 & \textsuperscript{---}\,$1.00\times$ & $\boldsymbol{1.00\times}$ & \textsuperscript{---}\,$1.00\times$ & $\boldsymbol{1.00\times}$ \\
  qrng\_n4 & \textsuperscript{B} & 4 & 0 & \textsuperscript{---}\,$1.00\times$ & $\boldsymbol{1.00\times}$ & \textsuperscript{---}\,$1.00\times$ & $\boldsymbol{1.00\times}$ \\
  teleportation\_n3 & \textsuperscript{B} & 8 & 0 & \textsuperscript{---}\,$1.00\times$ & $\boldsymbol{1.00\times}$ & \textsuperscript{---}\,$1.00\times$ & $\boldsymbol{1.00\times}$ \\
  tof\_3\_\_o3 & \textsuperscript{F} & 35 & 0 & \textsuperscript{---}\,$1.00\times$ & $\boldsymbol{1.00\times}$ & \textsuperscript{---}\,$1.00\times$ & $\boldsymbol{1.00\times}$ \\
  tof\_4\_\_o3 & \textsuperscript{F} & 55 & 0 & \textsuperscript{---}\,$1.00\times$ & $\boldsymbol{1.00\times}$ & \textsuperscript{---}\,$1.00\times$ & $\boldsymbol{1.00\times}$ \\
\bottomrule
\end{tabular}
\end{table*}

\begin{table*}[p]
\caption{As Table~\ref{tab:bench}, at $8\times512$: the $119$ vendored circuits that fit that grid, showing the $87$ BOPS shortens and the $8$ no system shortens.
BOPS is bold where it matches or beats all nine optimizers. \emph{cut}: gates that best removes.
The $24$ it leaves unchanged while some optimizer shortens them are \texttt{3\_17\_13}, \texttt{4gt11\_83}, \texttt{4gt11\_84}, \texttt{4mod5-v0\_18}, \texttt{4mod5-v0\_19}, \texttt{4mod5-v0\_20}, \texttt{4mod5-v1\_23}, \texttt{4mod5-v1\_24}, \texttt{adder\_n4}, \texttt{barenco\_tof\_3\_\_o3}, \texttt{bb84\_n8}, \texttt{decod24-bdd\_294}, \texttt{decod24-v2\_43}, \texttt{error\_correctiond3\_n5}, \texttt{fredkin\_n3}, \texttt{grover\_n2}, \texttt{mod5\_4}, \texttt{mod5d1\_63}, \texttt{one-two-three-v3\_101}, \texttt{qft\_4\_\_o3}, \texttt{qiskit-3\_17\_13}, \texttt{rd32-v1\_68}, \texttt{rd32\_270}, \texttt{tof\_4}.}
\label{tab:bench512}
\fontsize{8.2pt}{8.8pt}\selectfont
\setlength{\tabcolsep}{1.5pt}
\centering
\begin{tabular}{@{}lclrrr@{\hspace{1.5em}}lclrrr@{}}
\toprule
circuit & in & src & cut & best & BOPS & circuit & in & src & cut & best & BOPS \\
\midrule
lpn\_n5 & \textsuperscript{B} & 11 & 8 & \textsuperscript{pb}\,$3.67\times$ & $\boldsymbol{3.67\times}$ & alu-v2\_33 & \textsuperscript{GQ} & 43 & 2 & \textsuperscript{qi}\,$1.05\times$ & $\boldsymbol{1.07\times}$ \\
hs4\_n4 & \textsuperscript{B} & 40 & 24 & \textsuperscript{qu}\,$2.50\times$ & $\boldsymbol{2.50\times}$ & mini-alu\_167 & \textsuperscript{G} & 288 & 61 & \textsuperscript{gq}\,$1.27\times$ & $1.07\times$ \\
qec\_en\_n5 & \textsuperscript{B} & 25 & 15 & \textsuperscript{qu}\,$2.50\times$ & $2.27\times$ & mod10\_171 & \textsuperscript{G} & 247 & 41 & \textsuperscript{pr}\,$1.20\times$ & $1.07\times$ \\
iswap\_n2 & \textsuperscript{B} & 12 & 4 & \textsuperscript{qu}\,$1.50\times$ & $\boldsymbol{1.50\times}$ & mod5adder\_127 & \textsuperscript{G} & 585 & 90 & \textsuperscript{tz}\,$1.18\times$ & $1.07\times$ \\
rd32-v0\_66 & \textsuperscript{G} & 34 & 15 & \textsuperscript{qu}\,$1.79\times$ & $1.36\times$ & sf\_274 & \textsuperscript{G} & 820 & 223 & \textsuperscript{pr}\,$1.37\times$ & $1.07\times$ \\
deutsch\_n2 & \textsuperscript{B} & 8 & 2 & \textsuperscript{qi}\,$1.33\times$ & $\boldsymbol{1.33\times}$ & alu-v2\_31 & \textsuperscript{G} & 454 & 77 & \textsuperscript{tz}\,$1.20\times$ & $1.07\times$ \\
barenco\_tof\_3 & \textsuperscript{FGQ} & 60 & 16 & \textsuperscript{qu}\,$1.36\times$ & $1.30\times$ & 4gt5\_76 & \textsuperscript{G} & 91 & 19 & \textsuperscript{qu}\,$1.26\times$ & $1.07\times$ \\
tof\_3 & \textsuperscript{FGQ} & 45 & 10 & \textsuperscript{qu}\,$1.29\times$ & $1.22\times$ & 4gt4-v0\_72 & \textsuperscript{G} & 261 & 46 & \textsuperscript{tz}\,$1.21\times$ & $1.07\times$ \\
simon\_n6 & \textsuperscript{B} & 62 & 48 & \textsuperscript{qu}\,$4.43\times$ & $1.22\times$ & mod10\_176 & \textsuperscript{G} & 181 & 30 & \textsuperscript{qu}\,$1.20\times$ & $1.06\times$ \\
rd53\_131 & \textsuperscript{G} & 520 & 135 & \textsuperscript{pr}\,$1.35\times$ & $1.16\times$ & 4\_49\_16 & \textsuperscript{G} & 220 & 31 & \textsuperscript{qu}\,$1.16\times$ & $1.06\times$ \\
decod24-v0\_38 & \textsuperscript{G} & 54 & 11 & \textsuperscript{gq}\,$1.26\times$ & $1.15\times$ & barenco\_tof\_4\_\_o3 & \textsuperscript{F} & 68 & 6 & \textsuperscript{qu}\,$1.10\times$ & $1.06\times$ \\
4mod5-v1\_22 & \textsuperscript{GQ} & 24 & 4 & \textsuperscript{qu}\,$1.20\times$ & $1.14\times$ & mod8-10\_177 & \textsuperscript{G} & 443 & 62 & \textsuperscript{tz}\,$1.16\times$ & $1.06\times$ \\
rd53\_133 & \textsuperscript{G} & 580 & 136 & \textsuperscript{tz}\,$1.31\times$ & $1.12\times$ & mod8-10\_178 & \textsuperscript{G} & 345 & 48 & \textsuperscript{tz}\,$1.16\times$ & $1.06\times$ \\
mod5mils\_65 & \textsuperscript{GQ} & 38 & 7 & \textsuperscript{qu}\,$1.23\times$ & $1.12\times$ & 4gt12-v0\_86 & \textsuperscript{G} & 251 & 41 & \textsuperscript{tz}\,$1.20\times$ & $1.06\times$ \\
one-two-three-v1\_99 & \textsuperscript{G} & 135 & 23 & \textsuperscript{tz}\,$1.21\times$ & $1.12\times$ & one-two-three-v0\_97 & \textsuperscript{G} & 290 & 40 & \textsuperscript{tz}\,$1.16\times$ & $1.06\times$ \\
alu-v0\_26 & \textsuperscript{G} & 87 & 17 & \textsuperscript{qu}\,$1.24\times$ & $1.12\times$ & alu-v3\_34 & \textsuperscript{GQ} & 55 & 6 & \textsuperscript{qu}\,$1.12\times$ & $1.06\times$ \\
qiskit-alu-v0\_26 & \textsuperscript{G} & 87 & 17 & \textsuperscript{qu}\,$1.24\times$ & $1.12\times$ & 4gt4-v1\_74 & \textsuperscript{G} & 276 & 35 & \textsuperscript{tz}\,$1.15\times$ & $1.05\times$ \\
alu-v4\_36 & \textsuperscript{G} & 118 & 36 & \textsuperscript{gq}\,$1.44\times$ & $1.11\times$ & mod5\_4\_\_o3 & \textsuperscript{F} & 40 & 10 & \textsuperscript{qu}\,$1.33\times$ & $1.05\times$ \\
ham3\_102 & \textsuperscript{GQ} & 20 & 4 & \textsuperscript{qu}\,$1.25\times$ & $1.11\times$ & 4gt12-v0\_87 & \textsuperscript{G} & 247 & 39 & \textsuperscript{tz}\,$1.19\times$ & $1.05\times$ \\
ex2\_227 & \textsuperscript{G} & 646 & 130 & \textsuperscript{tz}\,$1.25\times$ & $1.11\times$ & 4mod7-v1\_96 & \textsuperscript{G} & 170 & 33 & \textsuperscript{qu}\,$1.24\times$ & $1.05\times$ \\
4gt13-v1\_93 & \textsuperscript{G} & 74 & 12 & \textsuperscript{tz}\,$1.19\times$ & $1.10\times$ & 4gt10-v1\_81 & \textsuperscript{G} & 151 & 17 & \textsuperscript{pr}\,$1.13\times$ & $1.05\times$ \\
rd53\_138 & \textsuperscript{G} & 132 & 21 & \textsuperscript{qu}\,$1.19\times$ & $1.10\times$ & 4gt13\_92 & \textsuperscript{GQ} & 66 & 12 & \textsuperscript{qu}\,$1.22\times$ & $1.05\times$ \\
alu-v2\_32 & \textsuperscript{G} & 166 & 27 & \textsuperscript{qu}\,$1.19\times$ & $1.10\times$ & toffoli\_n3 & \textsuperscript{B} & 24 & 1 & \textsuperscript{vq}\,$1.04\times$ & $\boldsymbol{1.04\times}$ \\
cm82a\_208 & \textsuperscript{G} & 671 & 183 & \textsuperscript{pr}\,$1.38\times$ & $1.10\times$ & 4gt13\_91 & \textsuperscript{G} & 103 & 25 & \textsuperscript{qu}\,$1.32\times$ & $1.04\times$ \\
mod5d2\_64 & \textsuperscript{G} & 56 & 10 & \textsuperscript{po}\,$1.22\times$ & $1.10\times$ & ham7\_104 & \textsuperscript{G} & 320 & 42 & \textsuperscript{tz}\,$1.15\times$ & $1.04\times$ \\
one-two-three-v0\_98 & \textsuperscript{G} & 146 & 19 & \textsuperscript{qu}\,$1.15\times$ & $1.10\times$ & 4gt13\_90 & \textsuperscript{G} & 107 & 24 & \textsuperscript{qu}\,$1.29\times$ & $1.04\times$ \\
4mod5-bdd\_287 & \textsuperscript{G} & 79 & 14 & \textsuperscript{qu}\,$1.22\times$ & $1.10\times$ & barenco\_tof\_4 & \textsuperscript{FGQ} & 114 & 28 & \textsuperscript{tz}\,$1.33\times$ & $1.04\times$ \\
majority\_239 & \textsuperscript{G} & 621 & 121 & \textsuperscript{tz}\,$1.24\times$ & $1.10\times$ & alu-bdd\_288 & \textsuperscript{G} & 87 & 26 & \textsuperscript{qu}\,$1.43\times$ & $1.04\times$ \\
sat\_n7 & \textsuperscript{B} & 243 & 76 & \textsuperscript{pb}\,$1.46\times$ & $1.09\times$ & decod24-v1\_41 & \textsuperscript{G} & 91 & 14 & \textsuperscript{qu}\,$1.18\times$ & $1.03\times$ \\
decod24-enable\_126 & \textsuperscript{G} & 338 & 51 & \textsuperscript{tz}\,$1.18\times$ & $1.09\times$ & decod24-v3\_45 & \textsuperscript{G} & 165 & 24 & \textsuperscript{qu}\,$1.17\times$ & $1.03\times$ \\
4gt4-v0\_73 & \textsuperscript{G} & 395 & 74 & \textsuperscript{tz}\,$1.23\times$ & $1.09\times$ & hwb4\_49 & \textsuperscript{G} & 233 & 34 & \textsuperscript{tz}\,$1.17\times$ & $1.03\times$ \\
C17\_204 & \textsuperscript{G} & 470 & 88 & \textsuperscript{tz}\,$1.23\times$ & $1.09\times$ & 4mod7-v0\_94 & \textsuperscript{G} & 162 & 29 & \textsuperscript{qu}\,$1.22\times$ & $1.03\times$ \\
sf\_276 & \textsuperscript{G} & 808 & 162 & \textsuperscript{tz}\,$1.25\times$ & $1.09\times$ & 4gt5\_77 & \textsuperscript{G} & 134 & 20 & \textsuperscript{tz}\,$1.18\times$ & $1.02\times$ \\
4gt4-v0\_79 & \textsuperscript{G} & 231 & 46 & \textsuperscript{pr}\,$1.25\times$ & $1.08\times$ & miller\_11 & \textsuperscript{GQ} & 50 & 6 & \textsuperscript{qu}\,$1.14\times$ & $1.02\times$ \\
rd53\_135 & \textsuperscript{G} & 296 & 44 & \textsuperscript{tz}\,$1.17\times$ & $1.08\times$ & aj-e11\_165 & \textsuperscript{G} & 154 & 19 & \textsuperscript{qu}\,$1.14\times$ & $1.02\times$ \\
4gt12-v0\_88 & \textsuperscript{G} & 194 & 26 & \textsuperscript{tz}\,$1.15\times$ & $1.08\times$ & hlf\_5 & \textsuperscript{G} & 170 & 151 & \textsuperscript{pb}\,$8.95\times$ & $1.02\times$ \\
4gt4-v0\_80 & \textsuperscript{G} & 182 & 22 & \textsuperscript{tz}\,$1.14\times$ & $1.08\times$ & hwb6 & \textsuperscript{FG} & 283 & 33 & \textsuperscript{pb}\,$1.13\times$ & $1.01\times$ \\
alu-v0\_27 & \textsuperscript{GQ} & 39 & 4 & \textsuperscript{qu}\,$1.11\times$ & $1.08\times$ & one-two-three-v2\_100 & \textsuperscript{G} & 72 & 5 & \textsuperscript{qu}\,$1.07\times$ & $1.01\times$ \\
hwb6\_\_o3 & \textsuperscript{F} & 641 & 401 & \textsuperscript{qi}\,$2.67\times$ & $1.08\times$ & qft\_4 & \textsuperscript{F} & 179 & 6 & \textsuperscript{tz}\,$1.03\times$ & $1.01\times$ \\
alu-v1\_28 & \textsuperscript{GQ} & 40 & 4 & \textsuperscript{qu}\,$1.11\times$ & $1.08\times$ & cat\_state\_n4 & \textsuperscript{B} & 4 & 0 & \textsuperscript{---}\,$1.00\times$ & $\boldsymbol{1.00\times}$ \\
alu-v3\_35 & \textsuperscript{GQ} & 40 & 4 & \textsuperscript{qu}\,$1.11\times$ & $1.08\times$ & ex-1\_166 & \textsuperscript{GQ} & 22 & 0 & \textsuperscript{---}\,$1.00\times$ & $\boldsymbol{1.00\times}$ \\
alu-v4\_37 & \textsuperscript{GQ} & 40 & 4 & \textsuperscript{qu}\,$1.11\times$ & $1.08\times$ & ex1\_226 & \textsuperscript{G} & 13 & 0 & \textsuperscript{---}\,$1.00\times$ & $\boldsymbol{1.00\times}$ \\
alu-v2\_30 & \textsuperscript{G} & 510 & 101 & \textsuperscript{gq}\,$1.25\times$ & $1.08\times$ & graycode6\_47 & \textsuperscript{G} & 5 & 0 & \textsuperscript{---}\,$1.00\times$ & $\boldsymbol{1.00\times}$ \\
4gt11\_82 & \textsuperscript{Q} & 27 & 3 & \textsuperscript{qu}\,$1.12\times$ & $1.08\times$ & qrng\_n4 & \textsuperscript{B} & 4 & 0 & \textsuperscript{---}\,$1.00\times$ & $\boldsymbol{1.00\times}$ \\
ex3\_229 & \textsuperscript{G} & 412 & 76 & \textsuperscript{tz}\,$1.23\times$ & $1.08\times$ & teleportation\_n3 & \textsuperscript{B} & 8 & 0 & \textsuperscript{---}\,$1.00\times$ & $\boldsymbol{1.00\times}$ \\
4gt12-v1\_89 & \textsuperscript{G} & 234 & 44 & \textsuperscript{tz}\,$1.23\times$ & $1.08\times$ & tof\_3\_\_o3 & \textsuperscript{F} & 35 & 0 & \textsuperscript{---}\,$1.00\times$ & $\boldsymbol{1.00\times}$ \\
4gt4-v0\_78 & \textsuperscript{G} & 235 & 38 & \textsuperscript{tz}\,$1.19\times$ & $1.08\times$ & tof\_4\_\_o3 & \textsuperscript{F} & 55 & 0 & \textsuperscript{---}\,$1.00\times$ & $\boldsymbol{1.00\times}$ \\
alu-v1\_29 & \textsuperscript{GQ} & 43 & 2 & \textsuperscript{qi}\,$1.05\times$ & $\boldsymbol{1.07\times}$ &  &  &  &  &  &  \\
\bottomrule
\end{tabular}
\end{table*}

\clearpage

\begin{figure*}[p]
\centering
\includegraphics[max width=\linewidth]{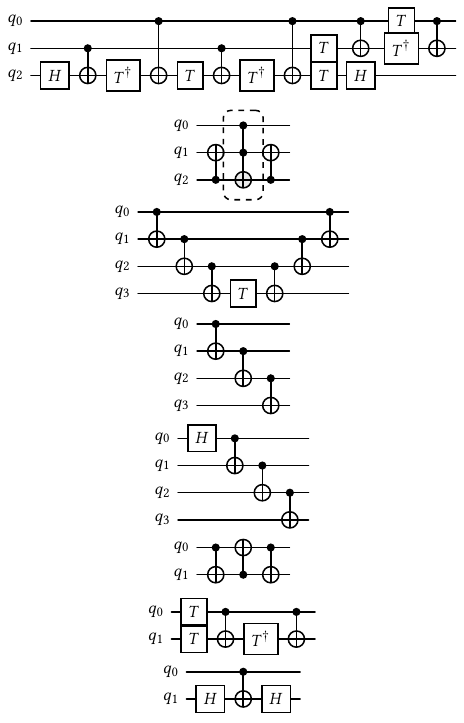}
\caption{\textbf{Motif library} (8 entries). Building blocks the circuit sampler draws from, all expressed in the Clifford+$T$ gate set $\{H,S,S^\dagger,T,T^\dagger,\textsc{cx}\}$; no parameterized rotations occur. The dashed Toffoli in the second circuit is a macro for the 15-gate circuit directly above it. The third through fifth circuits form the \emph{ladder family}: their width $k$ is sampled uniformly from $[k_{\min},Q]$ and is shown here at $k=4$.}
\Description{Eight Clifford+T circuit motifs, shown without side labels. They are, from top to bottom, a Toffoli decomposition, a Fredkin construction whose dashed box abbreviates the Toffoli above it, a phase gadget, a parity ladder, a GHZ ladder, a swap, a controlled S, and a controlled Z.}
\label{fig:motifs}
\end{figure*}
\begin{figure*}[p]
\centering
\includegraphics[max width=\linewidth]{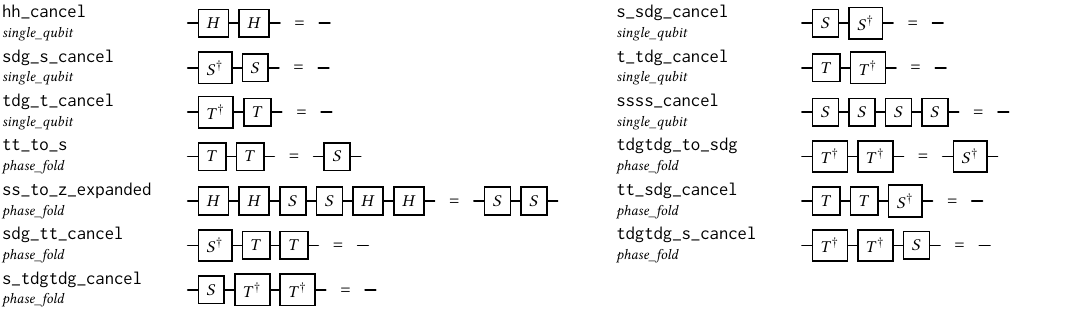}
\caption{\textbf{Single-qubit rewrite rules} (13 built-in). Family \textsf{single\_qubit} (self-inverse cancellation, 6 rules) and family \textsf{phase\_fold} (7 rules). A right-hand side drawn as a bare wire is the identity: the matched pattern is deleted.}
\Description{A table of thirteen single-qubit rewrite rules. Each entry gives the rule name and family, then the longer side of the rule as a one-qubit circuit diagram, an equals sign, and the shorter side. A right-hand side drawn as a bare wire denotes the identity.}
\label{fig:rules-1q}
\end{figure*}
\begin{figure*}[p]
\centering
\includegraphics[max width=\linewidth]{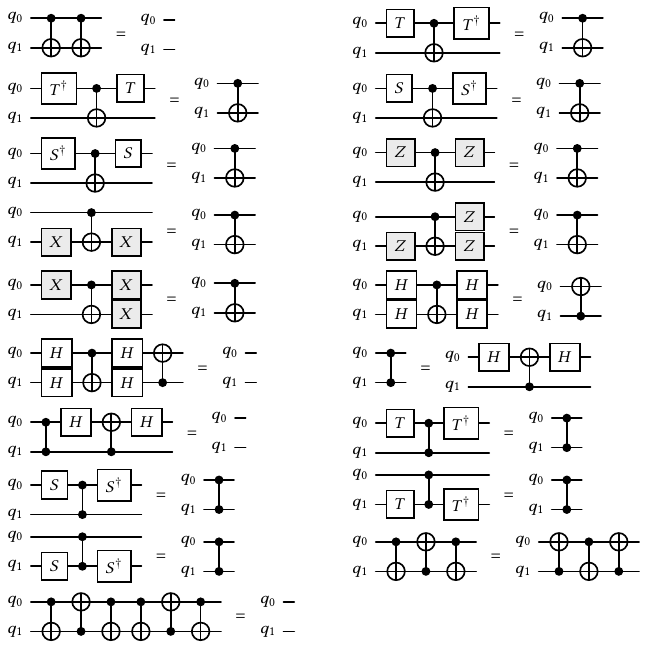}
\caption{\textbf{Two-qubit rewrite rules} (19 built-in). Composite gates use $Z=SS$, $X=HSSH$, and $CZ=H_1\,\textsc{cx}_{0,1}\,H_1$; the shaded $X$/$Z$ boxes and the $CZ$ symbol are drawn compactly. Three facts generate the whole group: diagonal gates commute through the \textsc{cx} control; \textsc{cx} conjugates $Z$ on the target to $Z_cZ_t$ and $X$ on the control to $X_cX_t$; and $H\otimes H$ conjugation reverses the \textsc{cx} direction.}
\Description{Nineteen two-qubit rewrite rules shown without side labels. Each entry shows the longer side of the rule as a two-qubit circuit diagram, an equals sign, and the shorter side. Composite X, Z and CZ gates are drawn as single shaded boxes.}
\label{fig:rules-2q}
\end{figure*}
\begin{figure*}[p]
\centering
\includegraphics[max width=\linewidth]{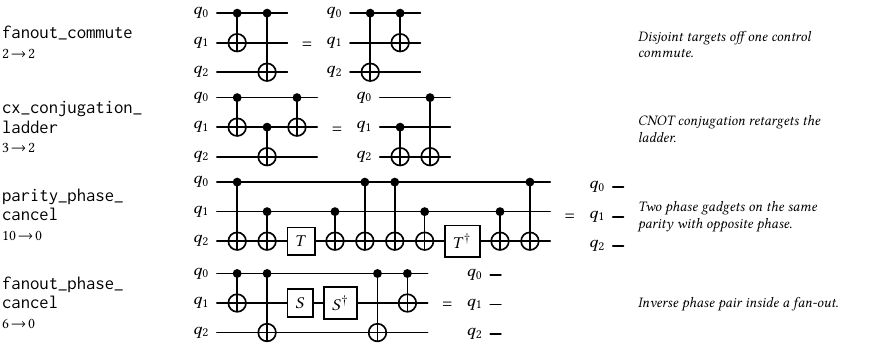}
\caption{\textbf{Three-qubit rewrite rules} (4 built-in).}
\Description{A table of four three-qubit rewrite rules, each shown as the longer side of the rule as a three-qubit circuit diagram, an equals sign, the shorter side, and a short note.}
\label{fig:rules-3q}
\end{figure*}
\begin{figure*}[p]
\centering
\includegraphics[max width=\linewidth]{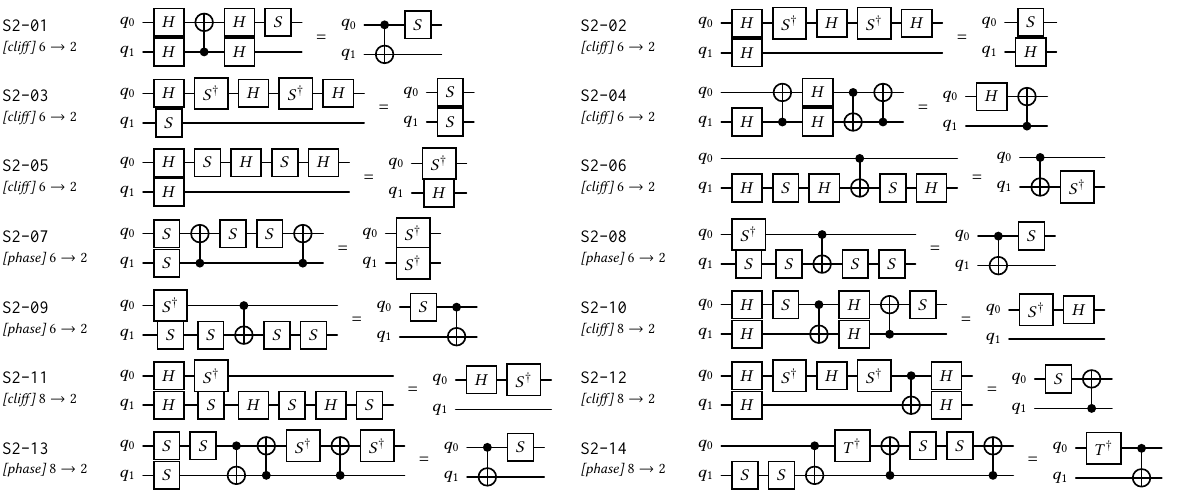}
\caption{\textbf{SAT-mined rewrite rules with two-qubit support} (14 of 50). Ruleset \texttt{q3\_t2\_restricted\_smax8} (\texttt{r50\_q3t2\_s8.json}): gate set Clifford+$T$, 3 qubits, support 2--3, source length 3--8 drawn from $\mathrm{Beta}(5,2)$, target length exactly 2, no identity targets, all-to-all connectivity, no ancillas, seed 7. Rules tagged \textsf{[cliff]} come from the \texttt{clifford\_z3} backend and hold up to a global phase (here always $e^{\pm i\pi/4}$); rules tagged \textsf{[phase]} come from \texttt{phase\_z3} and hold exactly.}
\Description{A table of fourteen SAT-mined rewrite rules acting on two qubits. Each entry gives the rule identifier, a backend tag of either cliff or phase, and the gate counts of the two sides, then the longer side as a circuit diagram, an equals sign, and the two-gate shorter side.}
\label{fig:sat-2q}
\end{figure*}
\begin{figure*}[p]
\centering
\includegraphics[max width=\linewidth]{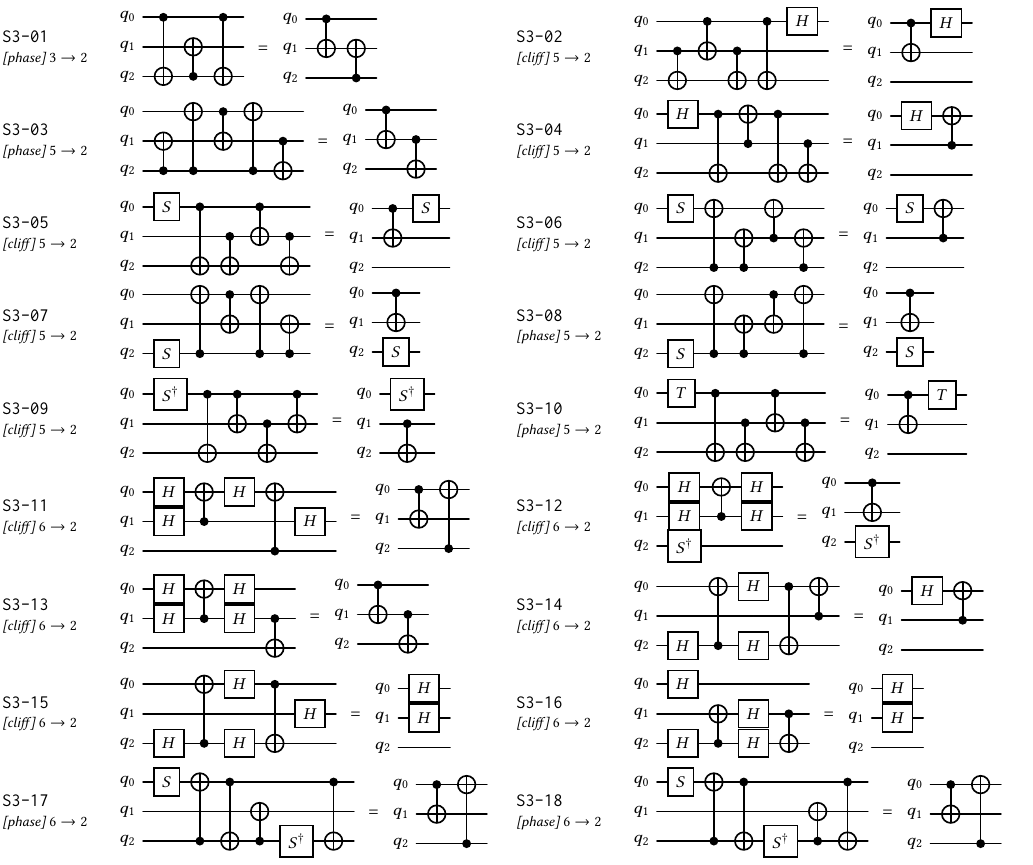}
\caption{\textbf{SAT-mined rewrite rules with three-qubit support} (36 of 50). Mining configuration and backend tags as in Fig.~\ref{fig:sat-2q}. Objective is \texttt{total\_primitive\_gate\_count}; every right-hand side has exactly two gates, so each rule removes between 1 and 6 gates. \emph{(part 1 of 2)}}
\Description{The first of two tables of SAT-mined rewrite rules acting on three qubits. Each entry gives the rule identifier, a backend tag, and the gate counts of the two sides, then the longer side as a three-qubit circuit diagram, an equals sign, and the two-gate shorter side.}
\label{fig:sat-3q}
\end{figure*}
\begin{figure*}[p]
\centering
\includegraphics[max width=\linewidth]{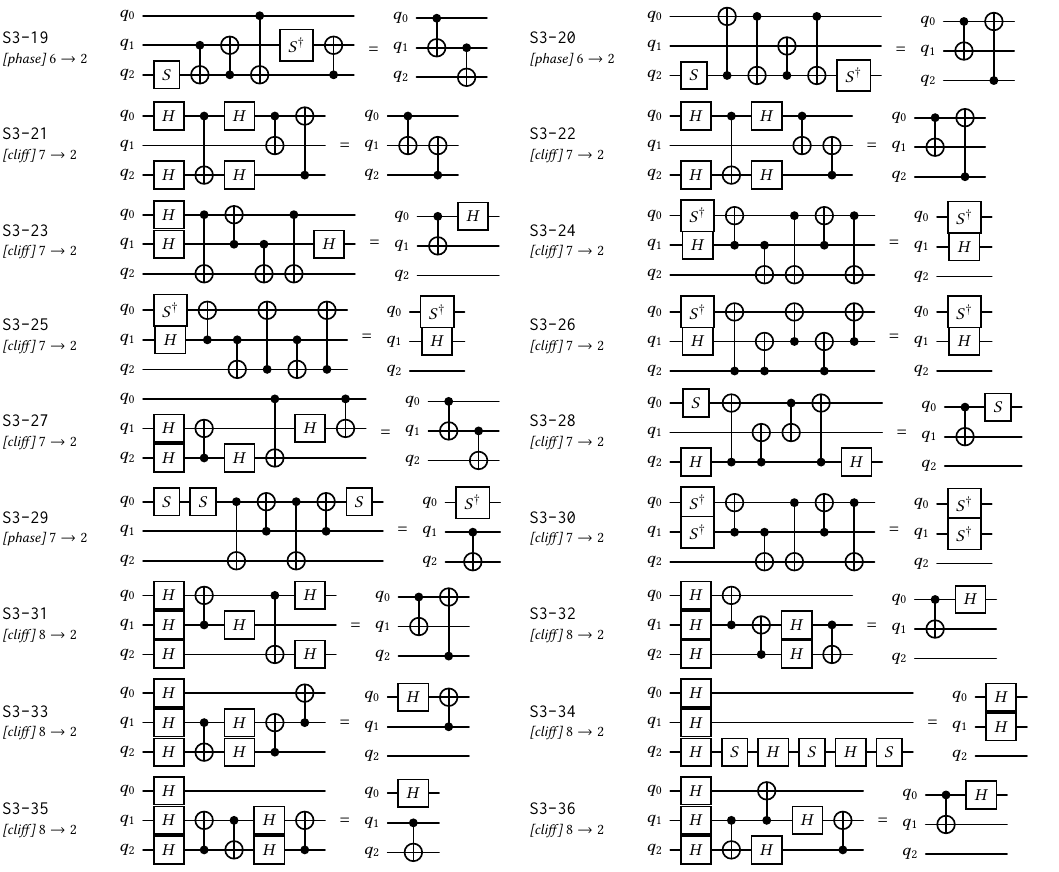}
\caption{\textbf{SAT-mined rewrite rules with three-qubit support} (36 of 50) \emph{(part 2 of 2, continued)}}
\Description{The second of two tables of SAT-mined rewrite rules acting on three qubits, laid out as the first.}
\label{fig:sat-3q-2}
\end{figure*}

\end{document}